\newtheorem{theorem}{Theorem}
\newtheorem{remark}{Remark}
\newtheorem{lemma}{Lemma}
\newtheorem{definition}{Definition}
\newtheorem{assumption}{Assumption}
\newtheorem{proposition}{Proposition}
\newcommand{\xinyiafterreview}[1]{#1}
\newcommand{\xinyiafterrevieww}[1]{#1}
\def\BibTeX{{\rm B\kern-.05em{\sc i\kern-.025em b}\kern-.08em
    T\kern-.1667em\lower.7ex\hbox{E}\kern-.125emX}}
\newcommand{\linebreakand}{%
  \end{@IEEEauthorhalign}
  \hfill\mbox{}\par
  \mbox{}\hfill\begin{@IEEEauthorhalign}
}
\begin{document}

\title{On-Policy Reinforcement-Learning Control for Optimal Energy Sharing and Temperature Regulation in District Heating Systems\\
}

\author{Xinyi Yi and Ioannis Lestas%
    \thanks{X. Yi and I. Lestas are with the Department of Engineering, University of Cambridge, Trumpington Street, Cambridge, CB2 1PZ, United Kingdom. Emails:
        {\tt\small <xy343, icl20>@cam.ac.uk}.}
}
\maketitle

\begin{abstract}
We address the problem of temperature regulation and optimal energy sharing in district heating systems (DHSs) where the demand and system parameters are unknown. We propose a temperature regulation scheme that employs data-driven on-policy updates that achieve these objectives. In particular, we show that the proposed control scheme converges to an optimal equilibrium point of the system, while also having guaranteed convergence to an optimal LQR control policy, thus providing good transient performance. The efficiency of our approach is also demonstrated through extensive simulations.
\end{abstract}

\section{Introduction}
\xinyiafterreview{Carbon neutrality places the heating sector at the center of emission reduction. District Heating Systems (DHSs), common in China, Russia, and Europe, can decarbonize heating by integrating low-carbon sources such as waste heat and large-scale heat pumps \cite{liu2024diversifying}. This transition cuts fossil fuel use but creates complex generation profiles, requiring advanced DHS controllers for efficiency.}

\xinyiafterreview{One challenge in DHSs with multiple energy sources is that the optimal equilibrium points for energy sharing are not known a priori, as they depend on the generally unknown demand profile. Traditional regulators and model predictive controllers require disturbance forecasting, risking suboptimality, while optimization-feedback methods often assume unrealistically fast dynamics not present in DHSs.} Furthermore, unknown system parameters, for example due to changing occupancy levels, complicates controller design. Our aim is to design controllers that achieve both of these objectives. This is achieved by combining transformations that achieve optimal equilibirum points and transient performance without a prior knowledge of demand, which have been used thus far in model-based schemes \cite{accconfer}, with data-based approaches that achieve asymptotically optimal transient performance.

Research in the area of data-driven Linear Quadratic Regulator (LQR) has evolved into two main approaches: indirect and direct controllers. Unlike indirect controllers, which require system identification, direct controllers circumvent this requirement. \xinyiafterreview{Recent advances in direct methods include reinforcement learning (RL), policy gradient optimization (PGO), and certainty-equivalence (CE). With greedy policy updates, RL often converges faster in deterministic settings. This work employs online RL for its model-free, real-time adaptability, while extensions to stochastic settings with disturbances and uncertainties are left for future work.} The papers \cite{6315769,735224} introduced adaptive online RL controllers, while \cite{lopez2023efficient} and \cite{hao2024quadratic} advanced offline RL strategies, with \cite{hao2024quadratic} specifically addressing Gaussian disturbances. These works employ Q-function estimation to ensure policy convergence.

Recent data-driven temperature controller mainly focused on optimizing building regulation \cite{cholewa2022easy}. Temperature regulation differs substantially between individual buildings and DHSs in two key aspects: 1) Modeling complexity—building models vary widely to capture unique structures, whereas DHSs rely on standardized components such as pipes and heat exchangers; 2) Energy sharing—building control balances cost and comfort for individuals, while DHSs emphasize efficient energy distribution across multiple consumers and generators.

The contribution of this paper is the introduction of an on-policy RL temperature regulation scheme for DHSs that achieves both optimal energy sharing with minimal steady-state temperature deviations and asymptotically optimal transient performance. The proposed method adapts online with guaranteed convergence to an optimal LQR controller and equilibrium point, independent of prior knowledge of disturbances or system models. Moreover, we prove that during online adaptation, the controller ensures stability and convergence to the optimal equilibrium point. 

The paper is structured as follows. The steady-state optimal energy sharing and temperature regulation problem are formulated in Section~II. A data-driven RL regulator with asymptotically optimal performance is presented in Section~III. Simulations validating the scheme are given in Section~IV, and key findings are summarized in Section~V.

\section{Preliminaries}
This section provides an overview of
the temperature dynamics and the economic dispatch problem that needs to be solved at steady-state within a DHSs.

\subsection{Temperature dynamics and economic dispatch in DHSs}
\begin{assumption}\textbf{(Properties of DHSs)\cite{accconfer}}

\begin{enumerate}
    \item Heat conduction is not considered in the model.
    \item Flow rates are assumed to be constant and are externally controlled by users. The paper does not address the control of mass flow rates.
    \item The water's density \(\rho\) and specific heat capacity \(c_{\text{s.h.}}\) are uniform and constant throughout the system.
    \item The network is designed to be leak-free and experiences no hydraulic losses.
    \item The supply and return networks are symmetric.
\end{enumerate}
\end{assumption}
\subsubsection{Temperature dynamics}
A DHS consists of supply and return networks with HXs, where consumers extract heat and return cooled water, and producers reheat and recirculate it as shown in Fig.~\ref{proandcon}. For HX \(i\), the supply and return temperatures are \(T_{i}^{S}\) and \(T_{i}^{R}\), the flow rates are \(q_{i}\) and \(q_i^{E}\), and all temperatures denote deviations from steady-state values under a nominal demand profile.

\vspace{-0.9cm}
  \begin{figure}[htbp]
	\vspace{0.5cm}
	\hspace{-0.1cm}
	\includegraphics[width=3.1in]{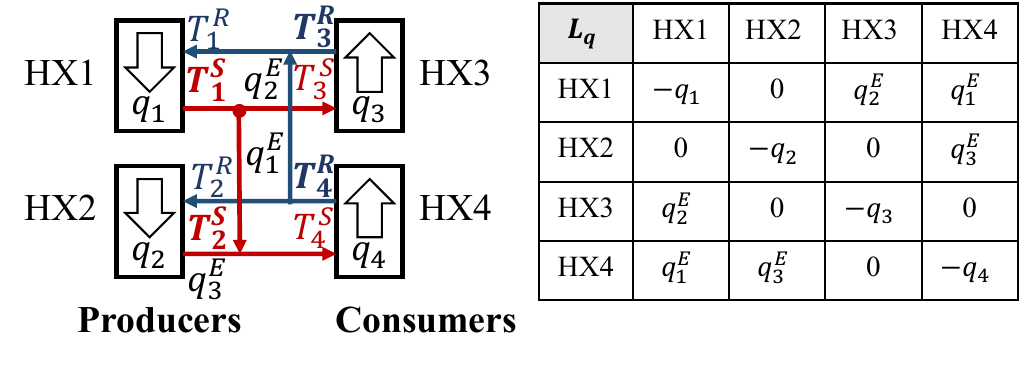}
	\vspace{-0.5cm}
	\caption{{\small {An example of a DHS, where HX 1 supplies heat for both HX 3 and HX 4, while HX2 only supplies heat for HX 4. The table gives the entries of the corresponding adjacency matrix $\boldsymbol{L_q}$ in \eqref{e1a}.}}}
	\label{proandcon}
\end{figure}
\vspace{-0.3cm}

The temperature dynamics of HX \(i\), as modelled in \cite{machado2022decentralized} under \textbf{Assumption 1}, are given as follows,
\begin{equation}\label{pipedy}
\rho c_{\text{s.h.}} V_i \dot{T}_{i}^{\text{out}} = \rho c_{\text{s.h.}} q_{i} (T_{i}^{\text{in}} - T_{i}^{\text{out}}) + P_i + P_i^{\text{dis}},
\end{equation}
where \( V_i \) is the volume of HX $i$, \(P_i\) is the controllable thermal power of HX \(i\), \(T_{i}^{\text{in}}\) and \(T_{i}^{\text{out}}\) are the inlet and outlet temperatures of HX \(i\), respectively. If HX \(i\) functions as a producer, \(P_i\) denotes the heat production power, \(T_{i}^{\text{in}}=T_i^R\) and \(T_{i}^{\text{out}}=T_i^S\). If HX \(i\) serves as a consumer, \(P_i\) denotes the adjustable heat load power, \(T_{i}^{\text{in}}=T_i^S\) and \(T_{i}^{\text{out}}=T_i^R\).

HX $i$'s inlet  $T_{i}^{in}$ is determined by the energy balance equation given as follows:
\begin{equation}\label{pipeenergybalance}
 0 = \sum_{j \in \mathcal{O}^{HX}_k \cap \mathcal{I}^{HX}_i}  q_j^E T_{k}^{\text{out}} - T_{i}^{in} \sum_{j \in \mathcal{O}^{HX}_k \cap \mathcal{I}^{HX}_i}  q_j^E,
\end{equation}
where $\mathcal{O}^{HX}_i$ and $\mathcal{I}^{HX}_i$ denote the sets of pipelines associated with the outlet and inlet of HX $i$, respectively.

By calculating the inlet temperature $T_{i}^{in}$ from the outlet temperature $T_{i}^{out}$ via equation (\ref{pipeenergybalance}), the temperature dynamics (\ref{pipedy}) can be described by the outlet temperature vector $\boldsymbol{T}$ as:
\begin{equation}\label{e1a}
\boldsymbol{V\dot{T}} = \boldsymbol{L_qT + P+ P^{dis} },
\end{equation}
where $\boldsymbol{P}$ is the vector of controllable load and production of HXs(i.e. it is the controlled variable).
\xinyiafterreview{$\boldsymbol{P^{dis}}$ denotes constant disturbances representing slow demand changes \cite{machado2022decentralized}.} Both $\boldsymbol{P}$ and $\boldsymbol{P^{dis}}$ are weighted by
$\frac{1}{\rho c_{\text{s.h.}}}$. $\boldsymbol{V}$ is the diagonal matrix of HXs' volumes. $\boldsymbol{L_q}$ is the adjacency matrix of flow rates between HXs, defined by:
\begin{equation}\label{Lqdefi}
\boldsymbol{L_q}(i,j) =
\begin{cases}
q^E_k, & (i \in \mathcal{G} \text{ and } j \in \mathcal{L}) or (i \in \mathcal{L} \text{ and } j \in \mathcal{G}), \\
-q_{i} & i=j \text{ and } i \in \mathcal{G} \cup \mathcal{L},\\
0, & \text{otherwise},
\end{cases}
\end{equation}
\noindent where $\mathcal{L}$ and $\mathcal{G}$ are the sets of consumer and producer HXs, respectively. $q^E_k$ is the flow rate through pipe $k$, connecting HX $i$ and HX $j$.

\subsubsection{Economic dispatch}
The economic dispatch for DHSs (\ref{e1a}) at steady-state, is modeled as two optimization problems.
\begin{subequations}\label{opt1}
\begin{align}
\textbf{E1:} &\ \min_{\boldsymbol{P}\in \mathbb{R}, \boldsymbol{T}\in \mathbb{R}} \Phi_1 = \sum \frac{1}{2} f_{i} (P_i)^2 = \frac{1}{2} \boldsymbol{P}^{\top} \boldsymbol{F} \boldsymbol{P},\\
    &\ \text{s.t.} \ \boldsymbol{L_q T} = -(\boldsymbol{P^{dis}} + \boldsymbol{P}),\label{5b}
\end{align}
\end{subequations}
\xinyiafterreview{where $\Phi_1$ denotes the cost of production and load adjustments, with \( \boldsymbol{F}=\mathrm{diag}\{ f_{i}\} \) and \( f_{i}>0 \) the cost coefficient at node \( i \). Since $\boldsymbol{F}$ is positive definite, $\Phi_1$ is quadratic and strictly convex. The linear constraints in \eqref{5b} make \textbf{E1} a convex optimization problem with a unique solution $\boldsymbol{P^*}$. As $\boldsymbol{L_q}$ is a Kirchhoff matrix \cite{hangos1999thermodynamic}, an optimal $\boldsymbol{T^*}$ satisfies  
$\boldsymbol{T^*} = -\boldsymbol{L_q^\dagger}(\boldsymbol{\bar{P}^{dis}} + \boldsymbol{P^*}) + z\boldsymbol{1}, \quad z \in \mathbb{R}$, 
where \( z\boldsymbol{1} \) lies in the nullspace of \( \boldsymbol{L_q} \) and $\boldsymbol{L_q^\dagger}$ is its Moore–Penrose pseudoinverse. The set of all such $\boldsymbol{T^*}$ is convex.}
 \footnote{\xinyiafterreview{Note that $\boldsymbol{T^*}$ is not unique; therefore, E2 minimizes the temperature deviation (see Lemma 1 in \cite{accconfer}). }}

The temperature deviation cost
is minimized in problem \textbf{E2}.
\begin{equation}\label{economic2}
\textbf{E2:} \ \min_{z \in \mathbb{R}, \boldsymbol{T \in \mathbb{R}}} \frac{1}{2} \boldsymbol{T}^\top \boldsymbol{G} \boldsymbol{T}, \quad s.t. \boldsymbol{T} = -\boldsymbol{L_q^\dagger} (\boldsymbol{\bar{P}^{dis}} + \boldsymbol{{P}^*}) + z\boldsymbol{1}.
\end{equation}
where \( \boldsymbol{G}=diag\{ g_{i}\} \), and \( g_{i}>0 \) represents the cost coefficient at node \( i \).

\begin{theorem}
If the DHS (\ref{e1a}) achieves equilibrium at $\boldsymbol{{T^*}^*}$ and $\boldsymbol{P}^*$, and satisfies $\boldsymbol{F^M P}^* = \boldsymbol{0}$ and $\boldsymbol{1^\top G {T^*}^*} = 0$, where \xinyiafterreview{$\boldsymbol{{T^*}^*}$ denotes the optimal temperature set for E1 and E2, and}  $\boldsymbol{F^M}$
is defined by the following matrix: {\small \begin{equation}\label{fmdifinition}
\begin{bmatrix}
F(1,1) & -F(2,2) & 0 & \cdots & 0\\
0 & F(2,2) & -F(3,3) & \cdots & 0\\
\vdots & \vdots & \vdots & \ddots & \vdots\\
\end{bmatrix},
\end{equation}} then it solves the optimization problems \textbf{E1} and \textbf{E2}. \footnote{Note that the optimality conditions are model-independent, providing a basis for designing a data-driven energy-sharing temperature controller.}.
\end{theorem}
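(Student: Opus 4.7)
The plan is to show that the two stated algebraic conditions are precisely the KKT optimality conditions of the two convex problems \textbf{E1} and \textbf{E2}, evaluated at the equilibrium point. Because both problems are convex with a strictly convex quadratic objective and linear constraints, the first-order conditions are both necessary and sufficient, so the reduction to KKT is all that is needed.

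First I would handle \textbf{E1}. An equilibrium of \eqref{e1a} satisfies $\boldsymbol{L_q T}+\boldsymbol{P}+\boldsymbol{P^{dis}}=\boldsymbol{0}$, so feasibility with respect to \eqref{5b} is automatic at $(\boldsymbol{T^{**}},\boldsymbol{P^*})$. I form the Lagrangian
\begin{equation}
\mathcal{L}_1 = \tfrac{1}{2}\boldsymbol{P}^\top \boldsymbol{F}\boldsymbol{P} + \boldsymbol{\lambda}^\top\!\bigl(\boldsymbol{L_q T}+\boldsymbol{P}+\boldsymbol{P^{dis}}\bigr),
\end{equation}
and read off stationarity in $\boldsymbol{P}$ and $\boldsymbol{T}$, which give $\boldsymbol{F}\boldsymbol{P}^*+\boldsymbol{\lambda}=\boldsymbol{0}$ and $\boldsymbol{L_q}^{\!\top}\boldsymbol{\lambda}=\boldsymbol{0}$. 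Since $\boldsymbol{L_q}$ is a (connected) Kirchhoff matrix, $\ker \boldsymbol{L_q}^{\!\top}=\mathrm{span}(\boldsymbol{1})$, so there exists $c\in\mathbb{R}$ with $\boldsymbol{\lambda}=c\boldsymbol{1}$. Combining with the first condition gives $\boldsymbol{F}\boldsymbol{P}^*=-c\boldsymbol{1}$, i.e.\ the scalars $f_i P_i^*$ are all equal. A direct inspection of the definition \eqref{fmdifinition} shows that this equal-marginal-cost property is exactly $\boldsymbol{F^M}\boldsymbol{P}^*=\boldsymbol{0}$, so the given hypothesis supplies the missing stationarity condition and $\boldsymbol{P}^*$ solves \textbf{E1}.

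Next I would handle \textbf{E2}. Using $\boldsymbol{P}^*$ from \textbf{E1}, parameterise the feasible set by the scalar $z$: $\boldsymbol{T}(z)=\boldsymbol{T_0}+z\boldsymbol{1}$ with $\boldsymbol{T_0}=-\boldsymbol{L_q^\dagger}(\boldsymbol{\bar P^{dis}}+\boldsymbol{P}^*)$. Substituting into the cost yields a scalar strictly convex quadratic in $z$, whose unique minimiser is characterised by
\begin{equation}
\frac{d}{dz}\Bigl[\tfrac12\boldsymbol{T}(z)^\top \boldsymbol{G}\,\boldsymbol{T}(z)\Bigr]=\boldsymbol{1}^\top \boldsymbol{G}\,\boldsymbol{T}(z)=0.
\end{equation}
The assumed condition $\boldsymbol{1}^\top \boldsymbol{G}\boldsymbol{T^{**}}=0$ is precisely this first-order condition, so $\boldsymbol{T^{**}}$ is optimal for \textbf{E2}. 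Feasibility follows because any equilibrium temperature satisfying the constraint of \textbf{E1} can be written in the form $\boldsymbol{T_0}+z\boldsymbol{1}$ by the range/nullspace decomposition induced by $\boldsymbol{L_q}$ (the general solution of \eqref{5b}).

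The main obstacle I anticipate is the justification that $\ker\boldsymbol{L_q}^{\!\top}$ is one-dimensional, so that $\boldsymbol{\lambda}$ must lie along $\boldsymbol{1}$; this relies on the Kirchhoff/connectedness structure of $\boldsymbol{L_q}$ reported in \cite{hangos1999thermodynamic} and on the symmetry assumption in \textbf{Assumption 1}. Once this is in hand, the remainder is bookkeeping: matching the equal-marginal-cost identity $f_i P_i^*=f_j P_j^*$ to the row pattern of $\boldsymbol{F^M}$ in \eqref{fmdifinition}, and invoking convexity to upgrade the derived stationarity conditions to global optimality for both \textbf{E1} and \textbf{E2}.
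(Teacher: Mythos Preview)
Your KKT-based argument is sound and is the natural way to establish this result: feasibility from the equilibrium equation, stationarity of $\mathcal{L}_1$ giving the equal-marginal-cost condition $f_iP_i^*=f_jP_j^*$ (which is exactly $\boldsymbol{F^M P^*}=\boldsymbol{0}$), and then the scalar first-order condition $\boldsymbol{1^\top G T}=0$ for the one-dimensional problem \textbf{E2}. Convexity makes these conditions sufficient, so the structure is correct.

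Note, however, that the paper does not actually prove Theorem~1 in the text; it simply refers the reader to Theorems~1--4 of \cite{accconfer}. So a side-by-side comparison with ``the paper's own proof'' is not possible here---your proposal is in fact more self-contained than what the paper provides. The one point you correctly flag as delicate, namely that $\ker \boldsymbol{L_q}^{\!\top}$ is one-dimensional and spanned by $\boldsymbol{1}$, is exactly the step that hinges on the Kirchhoff structure of $\boldsymbol{L_q}$ (column-sum-zero under mass conservation, connectedness of the network); the paper only asserts this via the citation \cite{hangos1999thermodynamic}, so your caveat is well placed. If you want the argument to be fully rigorous you should either invoke that reference explicitly or give the short flow-conservation computation showing $\boldsymbol{1}^\top \boldsymbol{L_q}=\boldsymbol{0}$ together with a rank argument.
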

The proof can be found in Theorems 1-4 of \cite{accconfer}.

\subsection{Problem formulation for optimal steady state}
\subsubsection{Discrete-time Temperature Dynamics}
We discretize the continuous-time DHS dynamics \eqref{e1a} using Zero-Order Hold with Euler’s approximation. The data-based controller also applies to other discretization methods. With sampling period \( \tau \) and \( \boldsymbol{E} = \frac{\boldsymbol{I}}{\boldsymbol{V}} \), the dynamics become:
\begin{equation}\label{disheat}
\boldsymbol{T_{k+1}} = \boldsymbol{(I +} \tau \boldsymbol{E L_q) T_{k}} + \tau \boldsymbol{E P_k} + \tau \boldsymbol{E P_k^{dis}}.
\end{equation}

We assume that \(-1\) is not an eigenvalue of \(\tau\boldsymbol{EL_q}\), which implies the controllability of \(((\boldsymbol{I} + \tau \boldsymbol{EL_q}) \boldsymbol{T_k}, \tau \boldsymbol{E})\) in (\ref{disheat}).

\subsubsection{Output and error definition}
We aim to design a temperature regulator ensuring convergence to an optimal equilibrium point $(\boldsymbol{{T^*}^*,P^*})$ as defined in \textbf{E1, E2}, respectively. To achieve this we consider an error signal which when driven to zero ensures that the corresponding optimality conditions are satisfied. Therefore, the output and error definitions are derived from the \textbf{E1-E2} optimality conditions in \textbf{Theorem 1}:
\begin{equation}\label{13}
y_k = \boldsymbol{1^\top G T_k},  \boldsymbol{e_k} = \boldsymbol{\Lambda_p P_k + \Lambda_y y_k},
\end{equation}
where $\boldsymbol{\Lambda_p} = \begin{bmatrix} \boldsymbol{F^M} \\ 0 \end{bmatrix}, \boldsymbol{\Lambda_y} = \begin{bmatrix} \boldsymbol{0} \\ 1 \end{bmatrix}$.

\subsubsection{Augmented Dynamics}
From \textbf{Theorem 1}, if $\boldsymbol{e_k = 0}$, the DHS reaches its optimal state. At steady state, $\boldsymbol{\delta T_k = T_k - T_{k-1} = 0}$ and $\boldsymbol{\delta u_k = P_k - P_{k-1} = 0}$. Furthermore, subtracting adjacent time terms from \eqref{disheat} and \eqref{13} eliminates disturbances, yielding the augmented dynamics:
\begin{subequations}\label{timingterms}
\begin{align}
\boldsymbol{\epsilon_{k+1}} &= \boldsymbol{A_\epsilon \epsilon_k + B_\epsilon \delta u_k}, \label{timingtermsa}\\
\boldsymbol{e_k} &= \boldsymbol{C_\epsilon \epsilon_k + D_\epsilon \delta u_k},\label{timingtermsb}
\end{align}
\end{subequations}
where the matrices are defined as: $\boldsymbol{A_\epsilon}= \begin{bmatrix}
\boldsymbol{(I +} \tau \boldsymbol{E L_q)} & \boldsymbol{0} \\
\boldsymbol{\Lambda_y 1^\top G} & \boldsymbol{I}
\end{bmatrix},$
$\boldsymbol{B_\epsilon} = \begin{bmatrix}
\tau \boldsymbol{E} \\
\boldsymbol{\Lambda_p}
\end{bmatrix},
\boldsymbol{C_\epsilon} = \boldsymbol{[\Lambda_y 1^\top G, I]},\boldsymbol{D_\epsilon} = \boldsymbol{\Lambda_p}$,
 $\boldsymbol{\epsilon_k} = \begin{bmatrix}
\boldsymbol{\delta T_k}\\
\boldsymbol{e_{k-1}}
\end{bmatrix}$
$\boldsymbol{\delta T_k} = \boldsymbol{T_k}\boldsymbol{- T_{k-1}, \delta u_k} = \boldsymbol{P_k - P_{k-1}}$.
\footnote{\xinyiafterreview{In the augmented system, the state variable $\boldsymbol{\epsilon_k}$ has dimension $n=2n_{T_k}$ and the input variable $\boldsymbol{\delta T_k}$ has dimension $m=n_{T_k}$. For a fixed input, the steady-state equilibrium is uniquely determined by the network balance; hence, driving the error state to zero ensures convergence to the corresponding optimal equilibrium. Notably, the error depends only on observable ratios of state and input variables, without requiring explicit knowledge of $\boldsymbol{L_q}$ or $P^{dis}$.}}

\begin{assumption}
The matrix \(\boldsymbol{\begin{bmatrix}
    \tau \boldsymbol{E L_q} & \tau \boldsymbol{E} \\
    \boldsymbol{\Lambda_y 1^\top G} & \boldsymbol{\Lambda_p}
    \end{bmatrix}}\) has full row rank.
This assumption ensures \(\boldsymbol{(A_\epsilon, B_\epsilon)}\) is controllable.\cite{asuk2021feedback}\footnote{\xinyiafterrevieww{If the DHS (\ref{disheat}) is stabilizable, then the first \(n_{\boldsymbol{T}_k}\) rows are linearly independent and the second \(n_{\boldsymbol{T}_k}\) rows are linearly independent due to distinct column entries; consequently, the matrix is generally of full row rank \(2n_{\boldsymbol{T}_k}\).}}
\end{assumption}

\subsubsection{Control problem definition}
\xinyiafterreview{The challenge of an unknown system matrix stems from variations in the \(\boldsymbol{L_q}\) coefficients with changing mass flow rates. For temperature regulation, we assume constant mass flow, since hydraulic dynamics evolve much faster than temperature dynamics.}

\begin{proposition}
When the augmented dynamic system (\ref{timingterms}) reaches the equilibrium point where \(\boldsymbol{\delta u_k^* = 0}\), \(\boldsymbol{\delta T_k^* = 0}\) and \(\boldsymbol{e_k^* = 0}\), the augmented state \(\boldsymbol{\epsilon_k^*=0}\). Consequently, the original DHS described by equations (\ref{disheat}, \ref{13}) is at its optimal equilibrium point \(\boldsymbol{({T^*}^*, P^*)}\), as defined in (\ref{opt1}, \ref{economic2}).
\end{proposition}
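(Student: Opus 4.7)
The proof decomposes naturally into three short steps corresponding to the three claims: (i) $\boldsymbol{\epsilon_k^*=0}$, (ii) the original DHS (\ref{disheat}) is at equilibrium, and (iii) the equilibrium coincides with the optimum $(\boldsymbol{{T^*}^*, P^*})$ of \textbf{E1}–\textbf{E2}.

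First, for step (i), I would just unfold the definition $\boldsymbol{\epsilon_k}=\begin{bmatrix}\boldsymbol{\delta T_k}\\\boldsymbol{e_{k-1}}\end{bmatrix}$. By hypothesis $\boldsymbol{\delta T_k^*=0}$; and since an equilibrium means all signals are constant in $k$, the condition $\boldsymbol{e_k^*=0}$ also gives $\boldsymbol{e_{k-1}^*=0}$. Substituting both into the definition immediately yields $\boldsymbol{\epsilon_k^*=0}$.

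Next, for step (ii), I would use $\boldsymbol{\delta T_k^*=0}$ and $\boldsymbol{\delta u_k^*=0}$ to conclude that both $\boldsymbol{T_k}$ and $\boldsymbol{P_k}$ are constant at equilibrium, so the original continuous-time dispatch constraint can be recovered from (\ref{disheat}) by setting $\boldsymbol{T_{k+1}}=\boldsymbol{T_k}$. Because $\boldsymbol{E}=\boldsymbol{V}^{-1}$ is invertible (its diagonal entries are positive), I can cancel the factor $\tau\boldsymbol{E}$ to obtain the algebraic equilibrium condition $\boldsymbol{L_q T_k^*+P_k^*+P^{dis}}=\boldsymbol{0}$, which is exactly the feasibility constraint \eqref{5b} of \textbf{E1}.

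Finally, for step (iii), I would translate $\boldsymbol{e_k^*=0}$ through the definition \eqref{13}. Using the block structure $\boldsymbol{\Lambda_p}=\begin{bmatrix}\boldsymbol{F^M}\\0\end{bmatrix}$ and $\boldsymbol{\Lambda_y}=\begin{bmatrix}\boldsymbol{0}\\1\end{bmatrix}$, the error vanishing separates into the two scalar/vector identities $\boldsymbol{F^M P_k^*}=\boldsymbol{0}$ and $y_k^*=\boldsymbol{1^\top G T_k^*}=0$. These are precisely the model-independent optimality conditions identified in \textbf{Theorem 1}; combined with feasibility from step (ii), \textbf{Theorem 1} directly yields that $(\boldsymbol{T_k^*},\boldsymbol{P_k^*})=(\boldsymbol{{T^*}^*},\boldsymbol{P^*})$ solves \textbf{E1}–\textbf{E2}.

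I do not foresee any substantive obstacle: the proposition is essentially a consistency check that the augmented construction $(\boldsymbol{\epsilon_k},\boldsymbol{e_k})$ faithfully encodes the optimality conditions of \textbf{Theorem 1}, so the only care needed is (a) justifying that $\boldsymbol{e_k^*=0}$ at equilibrium implies $\boldsymbol{e_{k-1}^*=0}$ (immediate from the definition of equilibrium) and (b) invoking the invertibility of $\boldsymbol{E}$ to cleanly strip the prefactor in the discretized dynamics when reading off the steady-state constraint.
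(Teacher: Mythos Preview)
Your proposal is correct and follows essentially the same approach as the paper: conclude $\boldsymbol{\epsilon_k^*}=\boldsymbol{0}$ from the equilibrium hypotheses and the structure of $\boldsymbol{\epsilon_k}$, then invoke \textbf{Theorem~1}. The paper's proof is terser---it argues via the full row rank of $\boldsymbol{C_\epsilon}=[\boldsymbol{\Lambda_y 1^\top G},\,\boldsymbol{I}]$ in \eqref{timingtermsb} rather than unfolding the definition of $\boldsymbol{\epsilon_k}$ as you do, and it leaves the steady-state feasibility check (your step~(ii)) implicit---so your write-up is, if anything, more explicit and self-contained.
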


\begin{proof}
The fact that the
the identity matrix \( \boldsymbol{I} \) is included in \( \boldsymbol{C_\epsilon} \) guarantees that the matrix \( \boldsymbol{C_\epsilon} \) has full row rank. At the equilibrium point where \( \boldsymbol{e_k=0} \), \( \boldsymbol{\delta u_k = P_k - P_{k-1} = 0} \), this leads to \( \boldsymbol{\epsilon_k = 0} \), and consequently, \( \boldsymbol{\delta T_k = 0} \). Under these conditions, optimality is confirmed based on \textbf{Theorem 1}, as all the corresponding optimality conditions are satisfied.
\end{proof}

\xinyiafterreview{We propose a data-driven LQR controller of the form \(\boldsymbol{\delta u_k} = -\boldsymbol{K_k \epsilon_k}\), which ensures convergence to the optimal equilibrium \((\boldsymbol{\delta u_k^* = 0, e_k^* = 0, \epsilon_k^* = 0})\) without prior knowledge of disturbances or system matrices. During transients, the controller is updated with guaranteed convergence to an optimal LQR controller for system \eqref{timingterms}.}

\section{On-policy RL temperature regulator}
\xinyiafterreview{Given uncertainties in model parameters such as mass flow rates, we develop an on-policy RL controller integrated with the Section II framework to ensure steady-state optimality.}

\subsection{Value function and Q function}
\subsubsection{Value function}
\xinyiafterreview{We optimize the trajectory \((\boldsymbol{e_k}, \boldsymbol{\delta u_k})\) of \eqref{timingterms} using a quadratic cost with positive definite weights \(\boldsymbol{Q_e}\) and \(\boldsymbol{R_e}\). For input \(\boldsymbol{\delta u_k} = -\boldsymbol{K \epsilon_k}\), the value function \(V^{\boldsymbol{K}}\) denotes the cost from time \(k\) onward under \(\boldsymbol{\delta u_i} = -\boldsymbol{K \epsilon_i}\).}
\begin{subequations}\label{overallcost}
    \begin{align}
     V^{\boldsymbol{K}}(\boldsymbol{e_k,\delta u_k}):
     = &\sum_{i=k}^\infty c(\boldsymbol{e_i},\boldsymbol{\delta u_i}) \text{ s.t. } \boldsymbol{\delta u_i}=\boldsymbol{-K\epsilon_i}\\
     =& \frac{1}{2}\sum_{i=k}^\infty \left(\boldsymbol{e_{i}^\top Q_e e_{i} + \delta u_{i}^\top R_e \delta u_{i}}\right).\label{19b}
    \end{align}
\end{subequations}

Substituting $\boldsymbol{e_i}$ with  $\boldsymbol{\epsilon_i}$ based on  (\ref{timingtermsb}), we can obtain:
{\small \begin{subequations}\label{overallcost1}
    \begin{align}
     V^{\boldsymbol{K}}(\boldsymbol{\epsilon_k}):
      =&\sum_{i=k}^\infty c(\boldsymbol{\epsilon_i},\boldsymbol{\delta u_i}) \text{ s.t. } \boldsymbol{\delta u_i}=\boldsymbol{-K\epsilon_i}\\
      =& \frac{1}{2}\sum_{i=k}^\infty(\boldsymbol{\epsilon_{i}^\top Q_\epsilon \epsilon_{i} + \epsilon_{i}^\top N_\epsilon \delta u_{i}+ \delta u_{i}^\top N_\epsilon^\top \epsilon_{i}}\\\notag
      &+\boldsymbol{ \delta u_{i}^\top }\boldsymbol{ R_\epsilon  \delta u_{i}})\\
    =&\frac{1}{2} \sum_{i=k}^\infty \boldsymbol{\begin{bmatrix}
        \epsilon_i,\delta u_i
    \end{bmatrix}\bar{Q}\begin{bmatrix}
        \epsilon_i\\
        \delta u_i
    \end{bmatrix}}\label{20cc}\\
    =& \frac{1}{2}\boldsymbol{\epsilon_k}^\top\boldsymbol{Q_\text{eff}\epsilon_k}+V^{\boldsymbol{K}}(\boldsymbol{\epsilon_{k+1}}) \ =\frac{1}{2}\boldsymbol{\epsilon_k}^\top \boldsymbol{P^K}\boldsymbol{\epsilon_k},\label{20cd}
    \end{align}
\end{subequations}}

\noindent where $\boldsymbol{Q_\epsilon = C_\epsilon^\top Q_e C_\epsilon}$, $\boldsymbol{ N_\epsilon = C_\epsilon^\top Q_e \Lambda_u}, \boldsymbol{R_\epsilon = R_e + \Lambda_u^\top }$ $\boldsymbol{Q_e \Lambda_u}$, $\boldsymbol{Q_{\text{eff}}}= \boldsymbol{Q_\epsilon-N_\epsilon K-K^\top N_\epsilon^\top}\boldsymbol{+K^\top R_\epsilon K}$ and $\boldsymbol{\bar{Q}}=\begin{bmatrix}
        \boldsymbol{Q_\epsilon}&\boldsymbol{N_\epsilon}\\
       \boldsymbol{N_\epsilon^\top}&\boldsymbol{R_\epsilon}
    \end{bmatrix}$. We examine the properties of the matrices associated with the value function, specifically \(\boldsymbol{P^K}\), \(\boldsymbol{\bar{Q}}\), and \(\boldsymbol{Q_\text{eff}}\), to facilitate further derivations. With positive definite matrices \(\boldsymbol{Q_e}\) and \(\boldsymbol{R_e}\), the value function $V^{\boldsymbol{K}}(\boldsymbol{e_k}, \boldsymbol{\delta u_k}) = V^{\boldsymbol{K}}(\boldsymbol{\epsilon_k}) = \frac{1}{2} \boldsymbol{\epsilon_k}^\top \boldsymbol{P^K} \boldsymbol{\epsilon_k} \geq 0$. This function only equals zero when \(\boldsymbol{e_i = 0}\), \(\boldsymbol{\delta u_i = 0}\), and \(\boldsymbol{\epsilon_i = 0}\) from step \(k\) onward. In other words, \(\frac{1}{2} \boldsymbol{\epsilon_k}^\top \boldsymbol{P^K} \boldsymbol{\epsilon_k} = 0\) occurs solely when \(\boldsymbol{\epsilon_i = 0}\). Thus, \(\boldsymbol{P^K}\) is positive definite. Similarly, the positiveness of \(\boldsymbol{\bar{Q}}\) and \(\boldsymbol{Q_\text{eff}}\) are verified through the stage cost as given in
    equation \eqref{19b} and from \eqref{20cc} to \eqref{20cd}:
\begin{equation}
\begin{aligned}
\boldsymbol{\begin{bmatrix}
        \epsilon_k & \delta u_k
    \end{bmatrix} \bar{Q} \begin{bmatrix}
        \epsilon_k \\
        \delta u_k
    \end{bmatrix}}&= \boldsymbol{e_{k}^\top Q_e e_{k} + \delta u_{k}^\top R_e \delta u_{k}}\\
    &= \boldsymbol{\epsilon_k^\top Q_\text{eff} \epsilon_k \geq 0}.
\end{aligned}
\end{equation}
From \eqref{20cd}, $\boldsymbol{P^K}$ satisfies 
the discrete-time Lyapunov equation
\begin{equation}\label{bellman1}
\begin{aligned}
\boldsymbol{P^K} = \boldsymbol{(A_\epsilon - B_\epsilon K)^\top P^K (A_\epsilon - B_\epsilon K) }\boldsymbol{+Q_\text{eff}}.
\end{aligned}
\end{equation}
This is known to have a unique positive definite solution $\boldsymbol{P^K}$ when $\boldsymbol{(A_\epsilon - B_\epsilon K)}$ is stable.

\subsubsection{Q function}
The Q-function under policy $\boldsymbol{K}$, denoted $Q^{\boldsymbol{K}}$, is the cost of applying an arbitrary input $\boldsymbol{\delta u_k}$ at time $k$, followed by $\boldsymbol{\delta u_i} = -\boldsymbol{K \epsilon_i}$ for $i > k$.
\begin{equation}\label{qfunction}
  Q^K\boldsymbol{(\boldsymbol{\epsilon_k}, \boldsymbol{\delta u_k})} = c(\boldsymbol{\epsilon_k}, \boldsymbol{\delta u_k}) + \boldsymbol{V^K(\epsilon_{k+1}}).
\end{equation}
It is important to note that the functions \(\boldsymbol{Q^K}\) (\ref{qfunction}) and \(\boldsymbol{V^K}\) (\ref{overallcost1}) are related as \(\boldsymbol{Q^K(\boldsymbol{\epsilon_k}, -\boldsymbol{K \epsilon_k})} = \boldsymbol{V^K(\boldsymbol{\epsilon_k})}\).
Substituting equations (\ref{timingterms}) and (\ref{overallcost1}) into (\ref{qfunction}), we get
\begin{subequations}\label{recur}
\begin{align}
 \boldsymbol{Q^K(\boldsymbol{\epsilon_k},\boldsymbol{\delta u_k})}
    =&\frac{1}{2}{\begin{bmatrix} \boldsymbol{\epsilon_k}\\\boldsymbol{\delta u_{k}}\end{bmatrix}}^\top
     \boldsymbol{\Theta^K} \begin{bmatrix}
     \boldsymbol{\epsilon_{k}}\\    \boldsymbol{\delta u_{k}}
    \end{bmatrix}\\
    =&  \frac{1}{2} {\begin{bmatrix} \boldsymbol{\epsilon_k}\\\boldsymbol{\delta u_{k}}\end{bmatrix}}^\top\begin{bmatrix}
        \boldsymbol{Q_\epsilon}&\boldsymbol{N_\epsilon}\\
        \boldsymbol{N_\epsilon^T}&\boldsymbol{R_\epsilon}
    \end{bmatrix}\begin{bmatrix} \boldsymbol{\epsilon_k}\\\boldsymbol{\delta u_{k}}\end{bmatrix}+\frac{1}{2}\notag\\
    &{\begin{bmatrix} \boldsymbol{\epsilon_{k+1}}\\\boldsymbol{-K\epsilon_{k+1}}\end{bmatrix}}^\top\boldsymbol{\Theta^K} \begin{bmatrix}
     \boldsymbol{\epsilon_{k+1}}\\    \boldsymbol{-K\epsilon_{k+1}}
    \end{bmatrix},\label{30c}
\end{align}
\end{subequations}
where $\boldsymbol{\Theta^K}=\begin{bmatrix} \boldsymbol{Q_\epsilon+ A_\epsilon^T P^K A_\epsilon}&\boldsymbol{N_\epsilon+ A_\epsilon^T P^K B_\epsilon}\\
\boldsymbol{N_\epsilon^T+ B_\epsilon^T P^K A_\epsilon}&\boldsymbol{R_\epsilon+ B_\epsilon^T P^K B_\epsilon}
\end{bmatrix}$.

$\boldsymbol{\Theta^K}$ can be partitioned as $\begin{bmatrix}
\boldsymbol{\Theta_{\epsilon\epsilon}}&    \boldsymbol{\Theta_{\delta u\epsilon}}^T\\
\boldsymbol{\Theta_{\delta u\epsilon}}&\boldsymbol{\Theta_{\delta u\delta u}}
\end{bmatrix}$, where $\boldsymbol{\Theta_{\epsilon\epsilon}}=\boldsymbol{Q_\epsilon+ A_\epsilon^T P^K A_\epsilon}$, $\boldsymbol{\Theta_{\delta u\epsilon}}=\boldsymbol{N_\epsilon^T+ B_\epsilon^T P^K A_\epsilon}$, $\boldsymbol{\Theta_{\delta u\delta u}}=\boldsymbol{R_\epsilon+ B_\epsilon^T P^K B_\epsilon}$.\footnote{Positive definite matrices $\boldsymbol{R_e}$, $\boldsymbol{Q_e}$, and $\boldsymbol{P^K}$ ensure the positive definiteness of $\boldsymbol{\Theta_{\delta u \delta u}}$, making it invertible.}

Substituting $\boldsymbol{\epsilon_{k+1}=A_\epsilon \epsilon_k+B_\epsilon \delta u_k}$, (\ref{recur}) is rewritten as:
\begin{equation}\label{rerecur}
    \boldsymbol{\Theta^K}=\boldsymbol{\bar{Q}}+ \boldsymbol{\phi_K}^\top \boldsymbol{\Theta^K}\boldsymbol{\phi_K},
\end{equation}
where $\boldsymbol{\phi_K}=\begin{bmatrix}
        \boldsymbol{A_\epsilon}&\boldsymbol{B_\epsilon}\\
        \boldsymbol{-KA_\epsilon}&\boldsymbol{-KB_\epsilon}
    \end{bmatrix}$.

\begin{lemma}
\textbf{(Existence and Uniqueness of \(\boldsymbol{\Theta^K}\) for a Stabilizing \(\boldsymbol{K}\)).}
Let \(\boldsymbol{K}\) be a stabilizing feedback gain, such that 
\(\rho(\boldsymbol{A_\epsilon} - \boldsymbol{B_\epsilon K}) < 1\). 
Then
\(\rho(\boldsymbol{\phi_K}) < 1\), ensuring the existence of a unique positive definite matrix \(\boldsymbol{\Theta^K}\), which satisfies \eqref{rerecur}.
\end{lemma}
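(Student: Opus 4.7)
The plan is to split the lemma into two parts: (i) show that $\rho(\boldsymbol{\phi_K}) < 1$ follows from $\rho(\boldsymbol{A_\epsilon} - \boldsymbol{B_\epsilon K}) < 1$, and then (ii) invoke the standard discrete-time Lyapunov theorem applied to \eqref{rerecur} to obtain a unique positive definite $\boldsymbol{\Theta^K}$.

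For part (i), the key observation I would exploit is that $\boldsymbol{\phi_K}$ admits the rank-revealing factorization
\begin{equation*}
\boldsymbol{\phi_K} \;=\; \begin{bmatrix}\boldsymbol{I}\\ -\boldsymbol{K}\end{bmatrix}\begin{bmatrix}\boldsymbol{A_\epsilon} & \boldsymbol{B_\epsilon}\end{bmatrix} \;=\; M N,
\end{equation*}
with $M \in \mathbb{R}^{(n+m)\times n}$ and $N \in \mathbb{R}^{n\times(n+m)}$. The classical identity that the nonzero eigenvalues of $MN$ and $NM$ coincide (with identical algebraic multiplicities) then gives
\begin{equation*}
NM \;=\; \begin{bmatrix}\boldsymbol{A_\epsilon} & \boldsymbol{B_\epsilon}\end{bmatrix}\begin{bmatrix}\boldsymbol{I}\\ -\boldsymbol{K}\end{bmatrix} \;=\; \boldsymbol{A_\epsilon} - \boldsymbol{B_\epsilon K},
\end{equation*}
so the spectrum of $\boldsymbol{\phi_K}$ is that of $\boldsymbol{A_\epsilon}-\boldsymbol{B_\epsilon K}$ augmented by $m$ extra zero eigenvalues. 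Hence $\rho(\boldsymbol{\phi_K}) = \rho(\boldsymbol{A_\epsilon}-\boldsymbol{B_\epsilon K}) < 1$. I could alternatively argue this through a Schur complement / similarity using the block-triangularization $\begin{bmatrix}\boldsymbol{I}&\boldsymbol{0}\\\boldsymbol{K}&\boldsymbol{I}\end{bmatrix}\boldsymbol{\phi_K}\begin{bmatrix}\boldsymbol{I}&\boldsymbol{0}\\-\boldsymbol{K}&\boldsymbol{I}\end{bmatrix}$, but the $MN$/$NM$ trick is the cleanest.

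For part (ii), once $\rho(\boldsymbol{\phi_K})<1$ is established, equation \eqref{rerecur} is a discrete-time Lyapunov equation $X = \boldsymbol{\bar Q} + \boldsymbol{\phi_K}^\top X \boldsymbol{\phi_K}$, whose unique solution is given by the convergent series $\boldsymbol{\Theta^K} = \sum_{j=0}^{\infty}(\boldsymbol{\phi_K}^\top)^j \boldsymbol{\bar Q}\, \boldsymbol{\phi_K}^j$. Existence and uniqueness are immediate from the spectral condition. Positive definiteness follows from the positive definiteness of $\boldsymbol{\bar Q}$ already established in the paragraph preceding \eqref{bellman1}: for any nonzero $v$, $v^\top \boldsymbol{\Theta^K} v = \sum_{j\ge 0}\|\boldsymbol{\bar Q}^{1/2}\boldsymbol{\phi_K}^j v\|^2 \ge \|\boldsymbol{\bar Q}^{1/2} v\|^2 > 0$ since $\boldsymbol{\bar Q}\succ 0$.

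The only nontrivial step is part (i), and within it the one point to state carefully is that $M$ and $N$ are rectangular, so the eigenvalue-preservation claim must be the nonzero-spectrum version of Sylvester's determinant identity ($\det(\lambda I_{n+m}-MN)=\lambda^{m}\det(\lambda I_{n}-NM)$) rather than a similarity. Everything else reduces to a one-line invocation of the standard Lyapunov solvability result, so I do not anticipate any deeper obstacle.
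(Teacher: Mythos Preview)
Your proposal is correct. The only difference from the paper is the mechanism you use in part~(i): the paper performs the similarity $\boldsymbol{T}^{-1}\boldsymbol{\phi_K}\boldsymbol{T}$ with $\boldsymbol{T}=\begin{bmatrix}\boldsymbol{I}&\boldsymbol{0}\\ -\boldsymbol{K}&\boldsymbol{I}\end{bmatrix}$ to obtain the block upper-triangular form $\begin{bmatrix}\boldsymbol{A_\epsilon}-\boldsymbol{B_\epsilon K}&\boldsymbol{B_\epsilon}\\ \boldsymbol{0}&\boldsymbol{0}\end{bmatrix}$ and reads off the eigenvalues, whereas you factor $\boldsymbol{\phi_K}=MN$ and invoke the Sylvester determinant identity so that $MN$ and $NM=\boldsymbol{A_\epsilon}-\boldsymbol{B_\epsilon K}$ share nonzero spectra. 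Both routes yield the identical conclusion $\rho(\boldsymbol{\phi_K})=\rho(\boldsymbol{A_\epsilon}-\boldsymbol{B_\epsilon K})$ in one line; your $MN/NM$ argument is marginally more economical (no block computation needed), while the paper's similarity makes the extra zero eigenvalues and the block structure explicit, which can be handy if one later wants to work with $\boldsymbol{\phi_K}$ in a transformed basis. You already list the similarity route as your alternative, so there is nothing to add. Part~(ii) is handled identically in both: once $\rho(\boldsymbol{\phi_K})<1$, the discrete Lyapunov equation \eqref{rerecur} has a unique solution, and positive definiteness follows from $\boldsymbol{\bar Q}\succ 0$; your explicit series and $j=0$ term argument make this slightly more self-contained than the paper's bare citation of the Lyapunov result.
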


\begin{proof}
Consider the transformation matrix:$\boldsymbol{T} = \begin{bmatrix}
\boldsymbol{I} & \boldsymbol{0} \\
-\boldsymbol{K} & \boldsymbol{I}
\end{bmatrix}$,$\boldsymbol{T^{-1}} = \begin{bmatrix}
\boldsymbol{I} & \boldsymbol{0} \\
\boldsymbol{K} & \boldsymbol{I}
\end{bmatrix}$.
Using this transformation, we compute $\boldsymbol{T^{-1}} \boldsymbol{\phi_K} \boldsymbol{T} =
\begin{bmatrix}
\boldsymbol{A_\epsilon - B_\epsilon K} & \boldsymbol{B_\epsilon} \\
\boldsymbol{0} & \boldsymbol{0}
\end{bmatrix}$. Since similarity transformations preserve eigenvalues, the eigenvalues of \(\boldsymbol{\phi_K}\) are the same as those of \(\boldsymbol{A_\epsilon - B_\epsilon K}\), augmented by additional zero eigenvalues. Given that \(\rho(\boldsymbol{A_\epsilon - B_\epsilon K}) < 1\), it follows that the spectral radius \(\rho(\boldsymbol{\phi_K}) = \rho(\boldsymbol{A_\epsilon - B_\epsilon K}) < 1\), confirming that all eigenvalues of \(\boldsymbol{\phi_K}\) lie strictly within the unit circle. Hence for any \(\boldsymbol{\bar{Q}} \succ 0\) the Lyapunov Equation \eqref{rerecur} admits a unique positive definite solution \(\boldsymbol{\Theta^K} \succ 0\).
\end{proof}

\subsubsection{Minimization of Q function over the control policy $\boldsymbol{K}$} With the control policy $\boldsymbol{\delta u_i} = -\boldsymbol{K} \boldsymbol{\epsilon_i}$ for steps $i\geq k+1$, the control input $\boldsymbol{\delta u_k} = -\boldsymbol{K_k} \boldsymbol{\epsilon_k}$ at time step $k$ is determined by solving the minimization problem over the Q-function:
\begin{equation}\label{minimizationpro}
    \min_{\boldsymbol{K_k}}  \boldsymbol{Q^K(\epsilon_k, \delta u_k)}=\frac{1}{2}\begin{bmatrix}
     \boldsymbol{\epsilon_{k}}^\top & \boldsymbol{\delta u_{k}}^\top
    \end{bmatrix} \boldsymbol{\Theta^K} \begin{bmatrix}
     \boldsymbol{\epsilon_{k}} \\ \boldsymbol{\delta u_{k}}
    \end{bmatrix}.
\end{equation}
Solving for $\boldsymbol{K_k}$ results in $\boldsymbol{\delta u_k} = -\boldsymbol{\Theta_{\delta u\delta u}^{-1} \Theta_{\delta u\epsilon}} \boldsymbol{\epsilon_k} = -\boldsymbol{K_k} \boldsymbol{\epsilon_k}$ at time step $k$, and thus $\boldsymbol{K_k} = \boldsymbol{\Theta_{\delta u\delta u}^{-1} \Theta_{\delta u\epsilon}}$, where $\boldsymbol{\Theta_{\delta u\epsilon}}$ and $\boldsymbol{\Theta_{\delta u\delta u}}$ are defined in (\ref{recur}).

\subsection{On-policy RL temperature regulator for DHS}
\xinyiafterreview{We use Equation \eqref{recur} to compute $\boldsymbol{\Theta^{K_i}}$ from data collected along system trajectories, enabling an on-policy RL control strategy. In particular, we propose a data-driven policy iteration \textbf{Algorithm 1}, which guarantees convergence to the optimal control policy and the desired equilibrium point.} \footnote{An initial stabilizing controller can be derived from the nominal DHS model, which is generally straightforward due to the natural passivity of DHSs.}

\begin{algorithm}
\caption{RL adaptive temperature regulator}
\begin{algorithmic}[1]
\State \textbf{Initialize:} Select an initial stabilizing feedback policy $\boldsymbol{\delta u_k = -K_0 \epsilon_k}$.

\While{true}
    \State \textbf{Iteration $i$:}
    \State \textbf{Data collection:} At time $k$, apply the current policy $\boldsymbol{\delta u_k = -K_i \epsilon_k+es_k}$ for $N=(n_{\boldsymbol{\epsilon_k}}+1)n_{\boldsymbol{\delta u_k}}+n_{\boldsymbol{\epsilon_k}}$ intervals, with $\boldsymbol{es_{k}}$ as the probing noise to guarantee persistence of excitation of order $n_{\boldsymbol{\epsilon_k}} + 1$ (Definition \ref{def:PE}).
    \State \textbf{Equations construction:} Using $\boldsymbol{z_{k}}=\begin{bmatrix}
     \boldsymbol{\epsilon_{k}}\\
     \boldsymbol{\delta u_{k}}
    \end{bmatrix}$  constructed from the collected data, define $N$ equations:
    \begin{equation}\label{estimation}
    \boldsymbol{z_{k}^\top \hat{\Theta}^{K_{i+1}} z_{k} = z_{k}^\top \bar{Q} z_{k}} + \boldsymbol{\zeta_{i, k+1}^\top \hat{\Theta}^{K_{i+1}} \zeta_{i, k+1}},
    \end{equation}
where $\boldsymbol{\zeta_{i, k+1}} =
\begin{bmatrix}
\boldsymbol{\epsilon_{k+1}} \\
\boldsymbol{-K_i \epsilon_{k+1}}
\end{bmatrix}$. Then solve the set of $N$ equations for $\boldsymbol{\hat{\Theta}^{K_{i+1}}}$.
    \State \textbf{Policy improvement:} Improve the controller:
    \begin{equation}
     \boldsymbol{K_{i+1} = [\hat{\Theta}_{\delta u\delta u}^{K_{i+1}}] ^{-1}\hat{\Theta}_{\delta u\epsilon}^{K_{i+1}}}.
    \end{equation}
    \State Set $i = i + 1$ and go to iteration $i$.
\EndWhile $\boldsymbol{\|K_{i+1} - K_i\|} < \varepsilon$ for some $\varepsilon > 0$, recording $\boldsymbol{\hat{K}^*=K_{i+1}}$.

\State Adopt the optimal feedback policy $\boldsymbol{\delta u_k = -\hat{K}^* \epsilon_k}$.
\end{algorithmic}
\end{algorithm}

\begin{definition}\textbf{(Persistent order (PE)\cite{lopez2023efficient}).} \label{def:PE}
Consider a system $\boldsymbol{x_{k+1} = Ax_k + Bu_k}$, where $\boldsymbol{u_k}$ is the input vector with dimension $m$ and $\boldsymbol{x_k}$ is the state vector with dimension $n$. Let $N \geq (m+1)L - 1$. An input sequence $\{\boldsymbol{u_k}\}_{k=0}^{N-1}$ is said to be persistently exciting (PE) of order $L$, if the Hankel matrix $H_L(\boldsymbol{u}_{[0,N-1]})$ has a full rank of $mL$.
The Hankel matrix is constructed as follows:
\small{\[
H_L(\boldsymbol{u}_{[0,N-1]}) =
\begin{bmatrix}
\boldsymbol{u_0} & \boldsymbol{u_1} & \cdots & \boldsymbol{u_{N-L}} \\
\boldsymbol{u_1} & \boldsymbol{u_2} & \cdots & \boldsymbol{u_{N-L+1}} \\
\vdots & \vdots & \ddots & \vdots \\
\boldsymbol{u_{L-1}} & \boldsymbol{u_L} & \cdots & \boldsymbol{u_{N-1}}
\end{bmatrix}.
\]}
\end{definition}

We note the following points on the algorithm:

\textbf{1. Data collection:}
For iteration $i$, we collect $N = (n_{\boldsymbol{\epsilon_k}} + 1) n_{\boldsymbol{\delta u_k}} + n_{\boldsymbol{\epsilon_k}}$ data samples by applying the PE input $\boldsymbol{\delta u_k}$ of order $L = n_{\boldsymbol{\epsilon_k}} + 1$ to the system. In the DHS (\ref{disheat}, \ref{13}), we have $n_{\boldsymbol{\epsilon_k}} = 2n_{\boldsymbol{T_k}}$ and $n_{\boldsymbol{\delta u_k}} = n_{\boldsymbol{T_k}}$. \footnote{The probing noise in the algorithm is designed to vanish as \(k \to \infty\) \cite{lopez2023efficient}.}

\textbf{2. Equation Construction:}
In the algorithm design, we ensure sufficient data for the calculation process of $\boldsymbol{\hat{\Theta}^{K_{i+1}}}$, as outlined in Equation (\ref{estimation}). There are $N$ equations, with each equation corresponding to the data from each time step.

\begin{lemma}\textbf{(Rank condition\cite{lopez2023efficient}).}
    Let the dataset $\{\boldsymbol{x_k}\}_{k=0}^{N-1}$ be gathered from a controllable system, where an input $\{\boldsymbol{u_k}\}_{k=0}^{N-1}$ as a PE of order $n + L'$ is applied, then we have:
\[
\text{rank}\begin{pmatrix}
H_1(x_{[0,N-L']}) \\
H_{L'}(u_{[0,N-1]})
\end{pmatrix} = L'm + n.
\]
\end{lemma}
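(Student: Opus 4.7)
This is Willems' Fundamental Lemma in its input-state formulation, so my plan is to adapt the standard argument to the notation of the paper. The result has two aspects: the trivial upper bound $\text{rank} \le mL' + n$ follows from counting rows, and the content is the reverse inequality, i.e.\ that the stacked Hankel has \emph{full row rank}. I would therefore dedicate the proof to showing that no nonzero row vector annihilates every column.

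My first step would be to introduce a candidate left null vector $[\alpha^{\top}\ \beta^{\top}]$ with $\alpha\in\mathbb{R}^{n}$ and $\beta=[\beta_{0}^{\top},\ldots,\beta_{L'-1}^{\top}]^{\top}\in\mathbb{R}^{mL'}$, and translate the null-space condition into
\[
\alpha^{\top}x_{j}+\sum_{\ell=0}^{L'-1}\beta_{\ell}^{\top}u_{j+\ell}=0,\qquad j=0,1,\ldots,N-L'.
\]
I would then substitute the explicit trajectory formula $x_{j}=A^{j}x_{0}+\sum_{i=0}^{j-1}A^{j-1-i}B\,u_{i}$ so that the identity becomes a linear relation in $x_{0}$ and the inputs $u_{0},\ldots,u_{j+L'-1}$ whose coefficients are explicit polynomials in $A$, $B$, $\alpha$, and $\beta$. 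Grouping these relations over $j$ gives a matrix equation of the form $[\,\alpha^{\top}\mathcal{O}\ \ \Gamma(\alpha,\beta)\,]\,[\,x_{0}^{\top}\ \ H_{n+L'}(u_{[0,N-1]})^{\top}\,]^{\top}=0$, where $\mathcal{O}$ collects powers of $A$ and $\Gamma$ combines convolutions of $\alpha$ with $A^{i}B$ and the tail terms contributed by $\beta$.

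The second step is to exploit the two hypotheses. Persistence of excitation of order $n+L'$ is exactly the statement that $H_{n+L'}(u_{[0,N-1]})$ has full row rank $m(n+L')$, so the matrix $[\,x_{0}\ \ H_{n+L'}(u_{[0,N-1]})\,]^{\top}$ has rank $n+m(n+L')$ (the extra $n$ coming from freedom of the initial state, or equivalently from shifting the window if $x_{0}$ is not free). This allows me to peel off the coefficients of each $u_{k}$ independently and conclude that $\Gamma(\alpha,\beta)=0$ and $\alpha^{\top}\mathcal{O}=0$. Reading the $\Gamma=0$ block from the highest index $k$ downward gives $\beta_{L'-1}=0$, then $\alpha^{\top}B=0$, then $\alpha^{\top}AB=0$, and so on up to $\alpha^{\top}A^{n-1}B=0$; controllability of $(A,B)$ then forces $\alpha=0$, and back-substitution yields $\beta=0$.

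The main obstacle is the bookkeeping in step two: keeping track of how $\alpha$ and $\beta$ interlace in the coefficients of each $u_{k}$ so that the triangular elimination genuinely reveals the controllability matrix $[B,AB,\ldots,A^{n-1}B]$. The reason the PE order must be $n+L'$ rather than just $L'$ is precisely to provide the $n$ extra input directions needed to probe $\alpha$ through the controllability matrix; once this matching is set up correctly, the conclusion $\alpha=\beta=0$ and hence the rank equality $\text{rank}=mL'+n$ is immediate.
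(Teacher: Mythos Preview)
The paper does not prove this lemma; it is quoted from \cite{lopez2023efficient} as a known result (Willems' fundamental lemma in input--state form), so there is no in-paper argument to compare against. Judged on its own, your overall strategy is the standard one, and the endgame---triangular elimination exposing $\alpha^{\top}[B,AB,\ldots,A^{n-1}B]=0$ and then invoking controllability---is exactly right.

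The gap is in your second step. Substituting $x_{j}=A^{j}x_{0}+\cdots$ all the way back to $x_{0}$ produces relations whose input block has length growing with $j$, so they do not assemble into a single left annihilator of a fixed matrix $[x_{0};H_{n+L'}(u)]$ as you wrote; and since $x_{0}$ is a \emph{fixed} vector from a single trajectory, the appeal to ``freedom of the initial state'' to claim rank $n+m(n+L')$ is unjustified. Your parenthetical ``shifting the window'' is the right instinct but needs to be carried out: express $x_{j+k}$ in terms of $x_{j}$ (not $x_{0}$) for $k=0,\ldots,n$, obtaining $n{+}1$ shifted null relations sharing the common state $x_{j}$; the Cayley--Hamilton combination of these kills the $x_{j}$ term entirely and yields a genuine row vector $\gamma^{\top}$ annihilating every column of $H_{n+L'}(u_{[0,N-1]})$. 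Persistence of excitation of order $n+L'$ now forces $\gamma=0$, and reading off $\gamma$ from the highest input index downward gives first $\beta_{L'-1}=\cdots=\beta_{0}=0$ and then $\alpha^{\top}B=\alpha^{\top}AB=\cdots=\alpha^{\top}A^{n-1}B=0$, completing the argument you outlined.
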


\begin{lemma}
\textbf{(Rank Condition of DHS Data)}.
The rank of the dataset \(\boldsymbol{z_{[0,N-1]}=\{z_k\}_{k=0}^{N-1}}\) collected from a DHS in an iteration of {\bf Algorithm 1} is \(3n_{\boldsymbol{T_k}}\).
\end{lemma}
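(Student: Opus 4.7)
The plan is to reduce the claim to a direct application of \textbf{Lemma 2} to the augmented system \eqref{timingterms}. I would first observe that stacking $\boldsymbol{z_k}=[\boldsymbol{\epsilon_k^\top},\boldsymbol{\delta u_k^\top}]^\top$ produces a vector of dimension $n+m = 2n_{\boldsymbol{T_k}} + n_{\boldsymbol{T_k}} = 3n_{\boldsymbol{T_k}}$, so $3n_{\boldsymbol{T_k}}$ is the maximum attainable rank of the data matrix whose columns are $\boldsymbol{z_0},\ldots,\boldsymbol{z_{N-1}}$; it therefore suffices to show this upper bound is achieved.

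Next I would verify the hypotheses of \textbf{Lemma 2} with $L'=1$. Controllability of $(\boldsymbol{A_\epsilon},\boldsymbol{B_\epsilon})$ is supplied by \textbf{Assumption 2}. The probing noise $\boldsymbol{es_k}$ introduced in Step~4 of \textbf{Algorithm 1} is designed so that the applied input $\boldsymbol{\delta u_k}=-\boldsymbol{K_i \epsilon_k}+\boldsymbol{es_k}$ is persistently exciting of order $n_{\boldsymbol{\epsilon_k}}+1 = n+L'$, as required. A direct arithmetic check confirms that the sample count $N=(n_{\boldsymbol{\epsilon_k}}+1)n_{\boldsymbol{\delta u_k}}+n_{\boldsymbol{\epsilon_k}}$ meets the minimum $N\geq (m+1)L-1$ with $L=n_{\boldsymbol{\epsilon_k}}+1$ prescribed in \textbf{Definition \ref{def:PE}}.

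Then invoking \textbf{Lemma 2} with $L'=1$ yields
\begin{equation*}
\mathrm{rank}\begin{pmatrix} H_1(\boldsymbol{\epsilon}_{[0,N-1]}) \\ H_1(\boldsymbol{\delta u}_{[0,N-1]}) \end{pmatrix} = L'm + n = n_{\boldsymbol{T_k}} + 2n_{\boldsymbol{T_k}} = 3n_{\boldsymbol{T_k}}.
\end{equation*}
Since the left-hand block matrix is exactly the data matrix whose columns are $\boldsymbol{z_k}$, the rank of $\boldsymbol{z_{[0,N-1]}}$ equals $3n_{\boldsymbol{T_k}}$, which is the claim.

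I expect the only real obstacle to be a bookkeeping one: ensuring that the closed-loop input $\boldsymbol{\delta u_k}=-\boldsymbol{K_i \epsilon_k}+\boldsymbol{es_k}$ genuinely inherits PE of order $n_{\boldsymbol{\epsilon_k}}+1$ from $\boldsymbol{es_k}$ rather than losing it through cancellation with the feedback term, and verifying that the arithmetic identity relating $N$ to the required sample bound holds exactly. Beyond these technicalities, the argument is essentially a one-line invocation of \textbf{Lemma 2}.
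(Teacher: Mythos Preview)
Your proposal is correct and follows essentially the same route as the paper: set $L'=1$, invoke \textbf{Lemma 2} for the controllable augmented system \eqref{timingterms} with the PE input guaranteed by Step~4 of \textbf{Algorithm~1}, and read off $\mathrm{rank}=L'm+n=3n_{\boldsymbol{T_k}}$. Your version is in fact more careful than the paper's, which omits the explicit hypothesis checks you include.
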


\begin{proof}
Given the data collection process, we set \( L' = 1 \), and thus deduce from \textbf{Lemma 2} that for a DHS:
\begin{align*}
\text{rank}(\boldsymbol{z_{[0,N-1]}})=&\text{rank}(\boldsymbol{H_1(z_{[0,N-1]})})\\
=&\text{rank} \begin{pmatrix}
H_1(\epsilon_{[0,N-1]}) \\
H_1(\delta u_{[0,N-1]})
\end{pmatrix}\\
=&\xinyiafterreview{m+n}=3n_{\boldsymbol{T_k}}.
\end{align*}
\end{proof}

\begin{lemma}\textbf{(A sequence of stabilizing controllers\cite{1098829}).}
Starting with a stabilizing controller $\boldsymbol{K_0}$, the algorithm will generate a sequence of stabilizing controllers $\boldsymbol{K_i}$.
\end{lemma}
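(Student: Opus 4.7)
The plan is to prove the statement by induction on the iteration index $i$, adapting the classical Kleinman/Hewer monotonic policy-improvement argument for LQR to this data-driven setting. The base case $i=0$ holds by the standing assumption that the initialization in Step 1 of \textbf{Algorithm 1} is stabilizing. For the inductive step, I assume $\boldsymbol{K_i}$ stabilizes $\boldsymbol{(A_\epsilon,B_\epsilon)}$ and show that the updated gain $\boldsymbol{K_{i+1}}$ inherits this property.

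The first step is to certify that $\boldsymbol{\hat{\Theta}^{K_{i+1}}}$ obtained from \eqref{estimation} coincides with the true Q-function matrix $\boldsymbol{\Theta^{K_i}}$ associated with the current policy. By Lemma 1, because $\boldsymbol{K_i}$ is stabilizing, \eqref{rerecur} admits a unique positive definite solution. By Lemma 3 and the PE condition of order $n_{\boldsymbol{\epsilon_k}}+1$, the $N$ collected regressors $\boldsymbol{z_k}$ span a subspace of rank $3 n_{\boldsymbol{T_k}}$, which, after exploiting the symmetry of $\boldsymbol{\hat{\Theta}^{K_{i+1}}}$, makes the linear system in \eqref{estimation} uniquely solvable, so $\boldsymbol{\hat{\Theta}^{K_{i+1}}}=\boldsymbol{\Theta^{K_i}}\succ 0$. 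The block $\boldsymbol{\Theta_{\delta u\delta u}}=\boldsymbol{R_\epsilon+B_\epsilon^\top P^{K_i} B_\epsilon}$ is therefore invertible and the improvement step in Line 6 returns the greedy minimizer of $\boldsymbol{Q^{K_i}(\epsilon_k,\delta u_k)}$ over $\boldsymbol{\delta u_k}$.

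The core of the argument is a Lyapunov-function comparison using $V^{\boldsymbol{K_i}}(\boldsymbol{\epsilon_k})=\tfrac12\boldsymbol{\epsilon_k^\top P^{K_i}\epsilon_k}$ as a candidate for the closed loop under $\boldsymbol{K_{i+1}}$. Since $-\boldsymbol{K_{i+1}\epsilon_k}$ minimizes $\boldsymbol{Q^{K_i}(\epsilon_k,\cdot)}$, we have $\boldsymbol{Q^{K_i}(\epsilon_k,-K_{i+1}\epsilon_k)}\le \boldsymbol{Q^{K_i}(\epsilon_k,-K_i\epsilon_k)}=V^{\boldsymbol{K_i}}(\boldsymbol{\epsilon_k})$. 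Expanding the left-hand side through \eqref{qfunction} with $\boldsymbol{\epsilon_{k+1}}=(\boldsymbol{A_\epsilon-B_\epsilon K_{i+1}})\boldsymbol{\epsilon_k}$ yields
\begin{equation}
V^{\boldsymbol{K_i}}\!\bigl((\boldsymbol{A_\epsilon-B_\epsilon K_{i+1}})\boldsymbol{\epsilon_k}\bigr)-V^{\boldsymbol{K_i}}(\boldsymbol{\epsilon_k})\le -c(\boldsymbol{\epsilon_k},-\boldsymbol{K_{i+1}\epsilon_k}).
\end{equation}
By positive definiteness of $\boldsymbol{Q_e}$ and $\boldsymbol{R_e}$, the stage cost evaluated along the new policy equals $\tfrac12\boldsymbol{\epsilon_k^\top Q_\text{eff}(K_{i+1})\epsilon_k}>0$ for $\boldsymbol{\epsilon_k}\ne \boldsymbol{0}$, so $V^{\boldsymbol{K_i}}$ is a strict discrete-time Lyapunov function for $\boldsymbol{A_\epsilon-B_\epsilon K_{i+1}}$. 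This forces $\rho(\boldsymbol{A_\epsilon-B_\epsilon K_{i+1}})<1$ and closes the induction.

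The main obstacle is the interaction between the data-driven identification and this Bellman-recursion argument. One must verify that, during the $N$-sample collection window, applying the probing input $\boldsymbol{es_k}$ in closed loop with the stabilizing $\boldsymbol{K_i}$ produces a regressor sequence rich enough to realise the rank bound of Lemma 3 without compromising the identity of $\boldsymbol{\hat{\Theta}^{K_{i+1}}}$ with the true $\boldsymbol{\Theta^{K_i}}$; once this is settled, the Lyapunov step is standard. A subtlety worth spelling out is that the greedy improvement identity $\boldsymbol{K_{i+1}}=\boldsymbol{\Theta_{\delta u\delta u}^{-1}\Theta_{\delta u\epsilon}}$ derived algebraically from the Q-function matches the continuous minimization in \eqref{minimizationpro}, so the inequality that drives the Lyapunov decrease is a genuine equality-with-slack rather than an artifact of approximation.
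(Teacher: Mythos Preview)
The paper does not supply its own proof of this lemma; it simply cites Hewer's classical discrete-time policy-iteration result. Your argument is exactly the Hewer/Kleinman induction that reference contains, so at the level of ``same approach'' the answer is yes. Where your write-up adds value is in making explicit the intertwining with the data-driven identification: at each inductive step you first invoke Lemmas~1 and~3 (essentially the content of Theorem~2) under the hypothesis that $\boldsymbol{K_i}$ is stabilizing to certify $\boldsymbol{\hat\Theta^{K_{i+1}}}=\boldsymbol{\Theta^{K_i}}$, and only then run the policy-improvement Lyapunov comparison. The paper's presentation separates Lemma~4 and Theorem~2, which hides the fact that each needs the other at every iteration; your inductive packaging is the rigorous way to handle that dependency.

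One point to tighten. The claim that the stage cost along the new policy is strictly positive, i.e.\ $\boldsymbol{Q_{\mathrm{eff}}(K_{i+1})}\succ 0$, does not follow from $\boldsymbol{Q_e},\boldsymbol{R_e}\succ 0$ alone: writing the stage cost as $\tfrac12\bigl\|(\boldsymbol{C_\epsilon}-\boldsymbol{D_\epsilon K_{i+1}})\boldsymbol{\epsilon_k}\bigr\|_{\boldsymbol{Q_e}}^{2}+\tfrac12\bigl\|\boldsymbol{K_{i+1}\epsilon_k}\bigr\|_{\boldsymbol{R_e}}^{2}$ shows it vanishes on $\ker(\boldsymbol{K_{i+1}})\cap\ker(\boldsymbol{C_\epsilon})$, and $\boldsymbol{C_\epsilon}$ has an $n_{\boldsymbol{T_k}}$-dimensional kernel. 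The standard fix is to note that any eigenvalue of $\boldsymbol{A_\epsilon-B_\epsilon K_{i+1}}$ with modulus $\ge 1$ whose eigenvector annihilates the stage cost must be an unobservable mode of $(\boldsymbol{A_\epsilon},\boldsymbol{C_\epsilon})$; detectability of that pair then rules it out. The paper leans on the same implicit property when it asserts $\boldsymbol{P^K}\succ 0$ after \eqref{20cd} and in the LaSalle step of Theorem~\ref{Thm:conv_eq}, so your proof is at the same level of rigor as the rest of the paper, but it is worth stating the detectability hypothesis explicitly.
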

It follows from \textbf{Lemmas 1} and \textbf{4} that, given an initial stabilizing controller $\boldsymbol{K_0}$, the sequence $\boldsymbol{K_i}$ generated by \textbf{Algorithm 1} uniquely corresponds to matrices $\boldsymbol{P^{K_i}}$ and $\boldsymbol{\Theta^{K_i}}$.

\begin{assumption}
\textbf{(Accurate Data Measurement)} Both $\boldsymbol{\epsilon_k}$ and $\boldsymbol{\delta u_k}$ are assumed to be measured accurately.
\end{assumption}

\begin{theorem}
\textbf{(Calculation of \(\boldsymbol{\Theta^{K_i}}\)).}
Using the data collected via \textbf{Algorithm 1} and under \textbf{Assumptions 1-3}, \(\boldsymbol{\hat{\Theta}^{K_{i+1}}}\) calculated from \eqref{estimation} equals \(\boldsymbol{\Theta^{K_i}}\) of \(\boldsymbol{K_i}\) in \eqref{rerecur}.\footnote{\xinyiafterreview{Since DHS temperature regulation is an LQR problem with a PE input sequence $\boldsymbol{\delta u_k}$, Theorem~2 guarantees accurate Q-function estimation, enabling reliable greedy policy updates and avoiding conservative gradient-descent methods like Proximal Policy Optimization and Soft Actor-Critic.}}
\end{theorem}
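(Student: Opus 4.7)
The plan is to prove the theorem in two stages: first show that $\boldsymbol{\Theta^{K_i}}$ is itself a solution of the $N$ linear equations (\ref{estimation}), and then show that this solution is unique, so the quantity returned by the least-squares (or exactly-determined) system solve, $\boldsymbol{\hat{\Theta}^{K_{i+1}}}$, must equal $\boldsymbol{\Theta^{K_i}}$. This parallels the classical argument for Q-learning policy evaluation, with the PE condition and Lemma~3 supplying the data richness and Lemma~1 supplying the Lyapunov uniqueness.

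For the existence step, I would observe that when the input $\boldsymbol{\delta u_k} = -\boldsymbol{K_i\epsilon_k} + \boldsymbol{es_k}$ is applied to the augmented dynamics (\ref{timingterms}), the successor sample satisfies $\boldsymbol{\epsilon_{k+1}} = \boldsymbol{A_\epsilon \epsilon_k + B_\epsilon \delta u_k}$, and so by construction $\boldsymbol{\zeta_{i,k+1}} = [\boldsymbol{\epsilon_{k+1}};\, -\boldsymbol{K_i\epsilon_{k+1}}] = \boldsymbol{\phi_{K_i}\, z_k}$. Substituting this identity into the Bellman/Lyapunov equation (\ref{rerecur}) gives, for every collected sample,
\begin{equation*}
\boldsymbol{z_k^\top \Theta^{K_i} z_k} = \boldsymbol{z_k^\top \bar{Q} z_k} + \boldsymbol{z_k^\top \phi_{K_i}^\top \Theta^{K_i} \phi_{K_i} z_k} = \boldsymbol{z_k^\top \bar{Q} z_k} + \boldsymbol{\zeta_{i,k+1}^\top \Theta^{K_i} \zeta_{i,k+1}},
\end{equation*}
so $\boldsymbol{\Theta^{K_i}}$ fulfils each of the $N$ relations in (\ref{estimation}); Assumption~3 (accurate measurement) ensures the measured $\boldsymbol{z_k},\boldsymbol{\zeta_{i,k+1}}$ really do come from the nominal dynamics, so no modelling error enters.

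For the uniqueness step, set $\boldsymbol{\Delta} = \boldsymbol{\hat{\Theta}^{K_{i+1}} - \Theta^{K_i}}$, which is symmetric because both matrices are. Subtracting the two versions of (\ref{estimation}) yields $\boldsymbol{z_k^\top(\Delta - \phi_{K_i}^\top \Delta \phi_{K_i})z_k} = 0$ for every sample $k$. Writing $\boldsymbol{Y} := \boldsymbol{\Delta - \phi_{K_i}^\top \Delta \phi_{K_i}}$, I would vectorise this as $\mathrm{vec}(\boldsymbol{Y})^\top \mathrm{vec}(\boldsymbol{z_k z_k^\top}) = 0$ and use the PE of order $n_{\boldsymbol{\epsilon_k}}+1$, the controllability of $(\boldsymbol{A_\epsilon},\boldsymbol{B_\epsilon})$ and the rank result of Lemma~3 to conclude that the span of the outer products $\boldsymbol{z_k z_k^\top}$ over the $N$ samples is large enough to force $\boldsymbol{Y}=0$. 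Once $\boldsymbol{\Delta} = \boldsymbol{\phi_{K_i}^\top \Delta \phi_{K_i}}$ is established, Lemma~4 guarantees $\boldsymbol{K_i}$ is stabilising, so by Lemma~1 we have $\rho(\boldsymbol{\phi_{K_i}})<1$, and this homogeneous discrete Lyapunov equation admits only the zero solution. Therefore $\boldsymbol{\hat{\Theta}^{K_{i+1}}} = \boldsymbol{\Theta^{K_i}}$.

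The hard part will be the spanning step, i.e.\ passing from the pointwise scalar identities $\boldsymbol{z_k^\top Y z_k}=0$ to the matrix identity $\boldsymbol{Y}=0$. Lemma~3 only gives that $\{\boldsymbol{z_k}\}$ spans $\mathbb{R}^{n+m}$, which is not by itself enough to span the $(n+m)(n+m+1)/2$-dimensional space of symmetric matrices via the rank-one outer products $\boldsymbol{z_k z_k^\top}$. The argument therefore has to exploit the additional structure coming from PE of order $n_{\boldsymbol{\epsilon_k}}+1$ together with the relation $\boldsymbol{\zeta_{i,k+1}} = \boldsymbol{\phi_{K_i} z_k}$ induced by the closed-loop dynamics; everything else (the Bellman substitution and the Lyapunov-uniqueness finish) is a routine consequence of the previously established Lemmas 1 and 4.
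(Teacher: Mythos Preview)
Your approach is essentially the paper's own: both arguments reduce the data equations \eqref{estimation} to the matrix Lyapunov equation \eqref{rerecur} via the substitution $\boldsymbol{\zeta_{i,k+1}}=\boldsymbol{\phi_{K_i}z_k}$, and then invoke Lemma~1 (together with Lemma~4 guaranteeing $\boldsymbol{K_i}$ is stabilising) for uniqueness of the positive-definite solution. Your explicit existence/uniqueness split, and the use of the homogeneous Lyapunov equation for $\boldsymbol{\Delta}$, are a cleaner presentation of the same logic. The spanning step you flag as ``the hard part'' is precisely the passage the paper compresses into the single sentence ``Equation~\eqref{estimation} can be rewritten as~\eqref{estimatere}'' together with the assertion that Lemma~3 supplies ``sufficient data''; the paper does not give a more detailed argument than you do here, so your honest identification of this as the crux is well placed rather than a deficiency relative to the published proof.
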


\begin{proof}
From \textbf{Lemma 3} we have that we can find a set of $3n_{\boldsymbol{T_k}}$ vectors $\boldsymbol{z_k=[\epsilon_k^\top, \delta u_k^\top]^\top}$ that are linearly independent.Equation (\ref{estimation}) can be rewritten as:
\begin{equation}\label{estimatere}
\boldsymbol{\hat{\Theta}^{K_{i+1}}} = \boldsymbol{\bar{Q}}+\boldsymbol{\phi_{K_i}}^\top \boldsymbol{\hat{\Theta}^{K_{i+1}}} \boldsymbol{\phi_{K_i}}.
\end{equation}
This equation is structurally identical to the one outlined in \eqref{rerecur}, which is satisfied by the true Q-function matrix \(\boldsymbol{\Theta^{K_i}}\). \textbf{Lemma 3} asserts that sufficient data is available to accurately identify \(\boldsymbol{\Theta^{K_i}}\). Additionally, \textbf{Lemma 1} establishs that \(\rho(\boldsymbol{\phi_{K_i}}))<1\),which ensures that the discrete-time Lyapunov equation has a unique positive definite solution. Therefore, due to the uniqueness property of the Lyapunov equation, we can conclude that \(\boldsymbol{\hat{\Theta}^{K_{i+1}}} = \boldsymbol{\Theta^{K_i}}\).
\end{proof}

\subsection{Algorithm convergence}
\xinyiafterreview{In the reminder of this section, we show that as $\varepsilon\to 0$ {\bf Algorithm 1}, $K_i$ is guaranteed to converge to an optimal LQR policy that minimizes the cost in \eqref{overallcost}. Every adaptive controller $\boldsymbol{K_i}$ generated ensures convergence to the optimal equilibrium point of the system.}

\begin{proposition}\textbf{(The Optimal Temperature Regulator \(\boldsymbol{K_*}\)).}
The optimal controller \(\boldsymbol{\delta u_k = -K_* \epsilon_k}\) that minimizes the infinite horizon cost in \eqref{overallcost} and its corresponding \(\boldsymbol{P^{K_*}}\) and \(\boldsymbol{\Theta_*}\) defined in \eqref{overallcost1}, \eqref{recur}, respectively, satisfy the following:
\begin{enumerate}
    \item \(\boldsymbol{P^{K_*} = (A_\epsilon - B_\epsilon K_*)^\top P^{K_*} (A_\epsilon - B_\epsilon K_*) + Q_{\text{eff}}}\),
    \item \(\boldsymbol{K_* = (R_\epsilon + B_\epsilon^\top P^{K_*} B_\epsilon)^{-1} (B_\epsilon^\top P^{K_*} A_\epsilon + Q_\epsilon)}\),
    \item \[
\boldsymbol{\Theta_*} = \begin{bmatrix}
\boldsymbol{Q_\epsilon + A_\epsilon^\top P^{K_*} A_\epsilon} & \boldsymbol{N_\epsilon + A_\epsilon^\top P^{K_*} B_\epsilon} \\
\boldsymbol{N_\epsilon^\top + B_\epsilon^\top P^{K_*} A_\epsilon} & \boldsymbol{R_\epsilon + B_\epsilon^\top P^{K_*} B_\epsilon}
\end{bmatrix}.
\]
\end{enumerate}
\end{proposition}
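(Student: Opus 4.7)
The plan is to derive all three claims by standard dynamic programming applied to the infinite-horizon discrete-time LQR with cross term, as formulated in \eqref{overallcost1}, leveraging the machinery already assembled in Section~III.A. First, I would establish the existence of an optimal stabilizing policy $\boldsymbol{K_*}$: controllability of $(\boldsymbol{A_\epsilon}, \boldsymbol{B_\epsilon})$ from \textbf{Assumption~2}, combined with $\boldsymbol{\bar{Q}}\succ 0$ (equivalently $\boldsymbol{Q_\mathrm{eff}}\succ 0$, already noted in Section~III.A), guarantees a unique positive-definite stabilizing solution of the DARE by classical LQR results. The same reasoning used to produce \eqref{bellman1} then gives that the optimal value function is quadratic, $V^{\boldsymbol{K_*}}(\boldsymbol{\epsilon_k})=\tfrac{1}{2}\boldsymbol{\epsilon_k^\top P^{K_*}\epsilon_k}$ with $\boldsymbol{P^{K_*}}\succ 0$.

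Second, I would invoke Bellman's optimality principle,
\[
V^{\boldsymbol{K_*}}(\boldsymbol{\epsilon_k}) \;=\; \min_{\boldsymbol{\delta u_k}} \Bigl\{\, c(\boldsymbol{\epsilon_k},\boldsymbol{\delta u_k}) + V^{\boldsymbol{K_*}}(\boldsymbol{A_\epsilon \epsilon_k + B_\epsilon \delta u_k}) \,\Bigr\},
\]
and substitute the quadratic forms of the stage cost from \eqref{overallcost1} (carrying the cross term $\boldsymbol{N_\epsilon}$) and of $V^{\boldsymbol{K_*}}$. The inner minimand is strictly convex in $\boldsymbol{\delta u_k}$ because $\boldsymbol{R_\epsilon + B_\epsilon^\top P^{K_*} B_\epsilon}\succ 0$, so setting its gradient to zero yields a unique minimizer $\boldsymbol{\delta u_k = -K_* \epsilon_k}$; collecting terms produces claim~2 (the optimal-gain formula), which is consistent with the Q-function rule $\boldsymbol{K_* = (\Theta_{\delta u\delta u}^{K_*})^{-1}\Theta_{\delta u\epsilon}^{K_*}}$ that drives \textbf{Algorithm~1}.

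Third and finally, I would substitute this minimizer back into Bellman's equation and equate coefficients of the quadratic forms on both sides; this yields claim~1, the discrete algebraic Riccati equation for $\boldsymbol{P^{K_*}}$. Claim~3 is then immediate by evaluating the expression for $\boldsymbol{\Theta^K}$ given below \eqref{30c} at $\boldsymbol{K=K_*}$; equivalently, $\boldsymbol{\Theta_*}$ is the unique positive-definite solution of the Lyapunov equation \eqref{rerecur} with $\boldsymbol{K=K_*}$, whose existence and uniqueness are guaranteed by \textbf{Lemma~1}. I do not expect a serious obstacle: the only step requiring care is the bookkeeping of the cross term $\boldsymbol{N_\epsilon}$ when differentiating the Bellman right-hand side and matching coefficients, while existence and uniqueness of the stabilizing DARE solution (normally the delicate issue) come for free because $\boldsymbol{Q_\mathrm{eff}}\succ 0$ yields observability of $(\boldsymbol{A_\epsilon}, \boldsymbol{Q_\mathrm{eff}^{1/2}})$, so controllability from \textbf{Assumption~2} is sufficient.
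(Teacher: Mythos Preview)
The paper does not actually supply a proof of Proposition~2; it is stated as a summary of standard discrete-time LQR identities and used as a reference point for the convergence analysis that follows. Your Bellman-equation derivation is the canonical way to establish these identities and is correct, so there is nothing to compare against --- you are simply filling in what the authors left implicit. One small caveat worth flagging when you ``collect terms'' for claim~2: the formula as printed in the paper has $\boldsymbol{Q_\epsilon}$ where the cross-term derivation you outline will produce $\boldsymbol{N_\epsilon^\top}$; be sure your write-up records the correct expression rather than reproducing what looks like a typographical slip.
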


\begin{lemma}\textbf{(Decreasing \(\boldsymbol{\hat{\Theta}^{K_i}}\).)}\label{lem:dec_theta}
Let \(\boldsymbol{\hat{\Theta}^{K_i}}\) and \(\boldsymbol{\hat{\Theta}^{K_{i+1}}}\) be the solutions of (\ref{estimation}) at two consecutive iterations of Algorithm 1. The following properties hold:
\begin{subequations}
\begin{align}
&\boldsymbol{\hat{\Theta}^{K_i}} - \boldsymbol{\hat{\Theta}^{K_{i+1}}} =
\sum_{j=0}^\infty
 \boldsymbol{\phi_{K_i}^\top}^j \boldsymbol{\Psi_{K_i}}^\top \boldsymbol{\hat{\Theta}^{K_i}} \boldsymbol{\Psi_{K_i}} \boldsymbol{\phi_{K_i}^j} \succeq 0,\label{aa}\\
&\boldsymbol{\hat{\Theta}^{K_{i+1}}} - \boldsymbol{\Theta_*} =
\sum_{j=0}^\infty
 \boldsymbol{\phi_{K_i}^\top}^j \boldsymbol{\Psi_*}^\top \boldsymbol{\Theta_*} \boldsymbol{\Psi_*} \boldsymbol{\phi_{K_i}^j} \succeq 0,\label{bb}
\end{align}
\end{subequations}
where
$\boldsymbol{\Psi_{K_i}} = \begin{bmatrix}
\boldsymbol{0} & \boldsymbol{0} \\
\boldsymbol{(K_i - K_{i-1})A_\epsilon} & \boldsymbol{(K_i - K_{i-1})B_\epsilon}
\end{bmatrix}$ and  \(\boldsymbol{\Psi_*} = \begin{bmatrix}
\boldsymbol{0} & \boldsymbol{0}\\
\boldsymbol{(K_* - K_{i+1}) A_\epsilon} & \boldsymbol{(K_* - K_{i+1}) B_\epsilon}
\end{bmatrix}\), and $(.)^j$ denotes the $j'th$ power of the matrix.
\end{lemma}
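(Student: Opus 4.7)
The plan is to express each of (aa) and (bb) as the unique positive semidefinite solution of a discrete-time Lyapunov equation whose driving term is PSD; the infinite-sum representation and the $\succeq 0$ conclusion will then both follow from $\rho(\phi_{K_i}) < 1$, which is guaranteed by Lemma~4 (so that each $K_i$ is stabilizing) together with Lemma~1. The algebraic trick shared by both parts is that whenever $K$ is the greedy policy with respect to some $\Theta$ satisfying the Bellman equation, the block identity $\Theta_{\delta u \epsilon} = \Theta_{\delta u \delta u} K$ holds, and this annihilates every cross-term of the form $\phi_K^\top \Theta \Psi$ in which $\Psi$ has a zero first block-row.

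For (aa), the starting point is the pair of Bellman equations $\hat{\Theta}^{K_i} = \bar{Q} + \phi_{K_{i-1}}^\top \hat{\Theta}^{K_i} \phi_{K_{i-1}}$ and $\hat{\Theta}^{K_{i+1}} = \bar{Q} + \phi_{K_i}^\top \hat{\Theta}^{K_{i+1}} \phi_{K_i}$, combined with the observation $\phi_{K_{i-1}} = \phi_{K_i} + \Psi_{K_i}$. Expanding the first quadratic produces $\phi_{K_i}^\top \hat{\Theta}^{K_i} \phi_{K_i} + \Psi_{K_i}^\top \hat{\Theta}^{K_i} \Psi_{K_i} + \phi_{K_i}^\top \hat{\Theta}^{K_i} \Psi_{K_i} + \Psi_{K_i}^\top \hat{\Theta}^{K_i} \phi_{K_i}$; the last two cross-terms vanish because $K_i$ is the greedy policy against $\hat{\Theta}^{K_i}$ (the update rule in Algorithm~1). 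Subtracting the two Bellman equations then yields
\[
\Delta_i \;=\; \phi_{K_i}^\top \Delta_i \phi_{K_i} + \Psi_{K_i}^\top \hat{\Theta}^{K_i} \Psi_{K_i}, \qquad \Delta_i := \hat{\Theta}^{K_i} - \hat{\Theta}^{K_{i+1}}.
\]

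Part (bb) follows by the same template, with Proposition~2(ii) now providing the greedy identity $\Theta_{*,\delta u\epsilon} = \Theta_{*,\delta u\delta u} K_*$. Decomposing $\phi_{K_*}$ against the appropriate neighbouring $\phi$ and cancelling cross-terms as before gives a Lyapunov equation $\Delta_* = \phi^\top \Delta_* \phi + \Psi_*^\top \Theta_* \Psi_*$ with $\Delta_* := \hat{\Theta}^{K_{i+1}} - \Theta_*$ and $\phi$ matching the index appearing in the claimed sum. Since Lemma~1 gives $\rho(\phi_{K_i}) < 1$, the generic Lyapunov equation $X = \phi_{K_i}^\top X \phi_{K_i} + W$ has the unique solution $X = \sum_{j=0}^\infty (\phi_{K_i}^\top)^j W \phi_{K_i}^j$, and choosing $W = \Psi_{K_i}^\top \hat{\Theta}^{K_i} \Psi_{K_i}$ in (aa) and $W = \Psi_*^\top \Theta_* \Psi_*$ in (bb) produces the displayed infinite sums; the PSD conclusion is immediate because $\hat{\Theta}^{K_i}, \Theta_* \succ 0$ by Lemma~1.

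I expect the main obstacle to be the cross-term cancellation, which is pure block-matrix algebra but easy to miscount. Writing $\phi_K = \begin{bmatrix} I \\ -K \end{bmatrix}[A_\epsilon, B_\epsilon]$ and $\Psi = \begin{bmatrix} 0 \\ I \end{bmatrix} (\Delta K) [A_\epsilon, B_\epsilon]$ makes the crucial middle factor collapse to $\hat{\Theta}_{\epsilon\delta u} - K^\top \hat{\Theta}_{\delta u \delta u}$, which vanishes precisely when $K$ is greedy against $\hat{\Theta}$; once this identity is in hand the remainder is standard discrete-time Lyapunov theory.
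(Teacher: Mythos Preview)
Your proposal is correct and follows essentially the same route as the paper: subtract the two Bellman-type equations \eqref{estimatere}, use $\phi_{K_{i-1}}=\phi_{K_i}+\Psi_{K_i}$ (respectively $\phi_{K_i}=\phi_*+\Psi_*$), kill the cross-terms via the greedy identity $\Theta_{\delta u\epsilon}=\Theta_{\delta u\delta u}K$, obtain the discrete Lyapunov equation $\Delta=\phi_{K_i}^\top\Delta\phi_{K_i}+W$, and read off the infinite-sum solution from $\rho(\phi_{K_i})<1$. The only cosmetic difference is that the paper writes out the telescoping sum $\sum_j(\phi_{K_i}^\top)^j\bar Q\,\phi_{K_i}^j=\hat\Theta^{K_{i+1}}$ explicitly before forming the difference, whereas you invoke the standard Lyapunov solution formula directly; the content is the same.
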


\begin{proof}
It follows from (\ref{estimatere}) that $\boldsymbol{\bar{Q}} = \boldsymbol{\hat{\Theta}^{K_{i+1}}} - {\boldsymbol{\phi_{K_i}}^\top} \boldsymbol{\hat{\Theta}^{K_{i+1}}} \boldsymbol{\phi_{K_i}}$, thus we have:
\begin{equation}
\begin{aligned}
&\sum_{j=0}^\infty \boldsymbol{\phi_{K_i}^\top}^j\boldsymbol{\bar{Q}} \boldsymbol{{\phi_{K_i}}^j}\\
=& \sum_{j=0}^\infty \boldsymbol{\phi_{K_i}^\top}^j[\boldsymbol{\hat{\Theta}^{K_{i+1}}} - {\boldsymbol{\phi_{K_i}}}^\top \boldsymbol{\hat{\Theta}^{K_{i+1}}} \boldsymbol{\phi_{K_i}}] \boldsymbol{\phi_{K_i}^j}\\
=& \sum_{j=0}^\infty \boldsymbol{\phi_{K_i}^\top}^j \boldsymbol{\hat{\Theta}^{K_{i+1}}} \boldsymbol{\phi_{K_i}^j} - \sum_{j=1}^\infty \boldsymbol{\phi_{K_i}^\top}^j \boldsymbol{\hat{\Theta}^{K_{i+1}}} \boldsymbol{\phi_{K_i}^j}\\
=& \boldsymbol{\hat{\Theta}^{K_{i+1}}}.
\end{aligned}
\end{equation}

It can be obtained from (\ref{estimatere}) that
\begin{equation}\label{331}
\boldsymbol{\hat{\Theta}^{K_{i}}} - \boldsymbol{\hat{\Theta}^{K_{i+1}}} = {\boldsymbol{\phi_{K_{i-1}}^\top}} \boldsymbol{\hat{\Theta}^{K_{i}}} \boldsymbol{\phi_{K_{i-1}}} - {\boldsymbol{\phi_{K_i}^\top}} \boldsymbol{\hat{\Theta}^{K_{i+1}}} \boldsymbol{\phi_{K_i}}.
\end{equation}

Note that with \(\boldsymbol{\phi_{K_{i-1}}} = \boldsymbol{\phi_{K_{i}}} + \boldsymbol{\Psi_{K_{i}}}\), we have:
\begin{equation}\label{33}
\begin{aligned}
\boldsymbol{\phi_{K_{i-1}}^\top} \boldsymbol{\hat{\Theta}^{K_i}} \boldsymbol{\phi_{K_{i-1}}} =& (\boldsymbol{\phi_{K_i}} + \boldsymbol{\Psi_{K_i}})^\top \boldsymbol{\hat{\Theta}^{K_i}} (\boldsymbol{\phi_{K_i}} + \boldsymbol{\Psi_{K_i}}) \\
=& \boldsymbol{\phi_{K_i}^\top} \boldsymbol{\hat{\Theta}^{K_i}} \boldsymbol{\phi_{K_i}} + \boldsymbol{\Psi_{K_i}^\top} \boldsymbol{\hat{\Theta}^{K_i}} \boldsymbol{\Psi_{K_i}} \\
&+ \boldsymbol{\phi_{K_i}^\top} \boldsymbol{\hat{\Theta}^{K_i}} \boldsymbol{\Psi_{K_i}} + \boldsymbol{\Psi_{K_i}^\top} \boldsymbol{\hat{\Theta}^{K_i}} \boldsymbol{\phi_{K_i}}.
\end{aligned}
\end{equation}
Substituting $\boldsymbol{K_i}$ based on the update policy in the algorithm $\boldsymbol{K_i} = [\boldsymbol{\hat{\Theta}^{K_i}_{\delta u \delta u}}]^{-1} \boldsymbol{\hat{\Theta}^{K_i}_{\delta u \epsilon}}$, the terms \(\boldsymbol{\phi_{K_i}}^\top \boldsymbol{\hat{\Theta}^{K_i}} \boldsymbol{\Psi_{K_i}} = 0\) and \(\boldsymbol{\Psi_{K_i}}^\top \boldsymbol{\hat{\Theta}^{K_i}} \boldsymbol{\phi_{K_i}} = 0\). Thus (\ref{33}) can be rewritten as:
\begin{equation}\label{replace}
\boldsymbol{\phi_{K_{i-1}}^\top} \boldsymbol{\hat{\Theta}^{K_i}} \boldsymbol{\phi_{K_{i-1}}} = \boldsymbol{\phi_{K_i}^\top} \boldsymbol{\hat{\Theta}^{K_i}} \boldsymbol{\phi_{K_i}} + \boldsymbol{\Psi_{K_i}^\top} \boldsymbol{\hat{\Theta}^{K_i}} \boldsymbol{\Psi_{K_i}}.
\end{equation}
Substituting (\ref{replace}) into (\ref{331}), we can obtain that:
\begin{equation}\label{inter}
\begin{aligned}
\boldsymbol{\hat{\Theta}^{K_i}} - \boldsymbol{\hat{\Theta}^{K_{i+1}}} = &
\boldsymbol{\phi_{K_i}^\top}
\left( \boldsymbol{\hat{\Theta}^{K_i}} - \boldsymbol{\hat{\Theta}^{K_{i+1}}} \right)
\boldsymbol{\phi_{K_i}} \\
&+
\boldsymbol{\Psi_{K_i}^\top}
\boldsymbol{\hat{\Theta}^{K_i}}
\boldsymbol{\Psi_{K_i}}.
\end{aligned}
\end{equation}
For simplicity in notation, we denote \(\boldsymbol{\hat{\Theta}^{K_i}} - \boldsymbol{\hat{\Theta}^{K_{i+1}}}\) as \(\Delta_i\), thus (\ref{inter}) can be rewritten as:
\begin{equation}\label{inter2}
\begin{aligned}
\boldsymbol{\Psi_{K_i}^\top}
\boldsymbol{\hat{\Theta}^{K_i}}
\boldsymbol{\Psi_{K_i}} &=
\Delta_i - \boldsymbol{\phi_{K_i}^\top}
\Delta_i
\boldsymbol{\phi_{K_i}}.
\end{aligned}
\end{equation}
Therefore, (\ref{aa}) can be derived as follows,
\begin{equation}\label{finalconvergence}
\begin{aligned}
&\sum_{j=0}^\infty
\boldsymbol{\phi_{K_i}^\top}^j [\boldsymbol{\Psi_{K_i}}^\top
\boldsymbol{\hat{\Theta}^{K_i}}
\boldsymbol{\Psi_{K_i}}] \boldsymbol{\phi_{K_i}^j}\\
=& \sum_{j=0}^\infty
\boldsymbol{\phi_{K_i}^\top}^j [\Delta_i - \boldsymbol{\phi_{K_i}}^\top
\Delta_i
\boldsymbol{\phi_{K_i}}] \boldsymbol{\phi_{K_i}^j}= \boldsymbol{\hat{\Theta}^{K_i}} - \boldsymbol{\hat{\Theta}^{K_{i+1}}}.
\end{aligned}
\end{equation}

Similarly, we can deduce from (\ref{estimatere}) that:
\begin{equation}\notag
\begin{aligned}
\boldsymbol{\hat{\Theta}^{K_{i+1}} - \Theta_*} =&
\boldsymbol{\phi_{K_i}^\top \hat{\Theta}^{K_{i+1}} \phi_{K_i}} - \boldsymbol{\phi_*^\top \Theta_* \phi_*}\\
=&\boldsymbol{(\phi_*^\top + \Psi_*^\top) \hat{\Theta}^{K_{i+1}} (\phi_* + \Psi_*)} - \boldsymbol{\phi_*^\top \Theta_* \phi_*}\\
=& \boldsymbol{\phi_*^\top (\hat{\Theta}^{K_{i+1}} - \Theta_*) \phi_*} + \boldsymbol{\Psi_*^\top \hat{\Theta}^{K_{i+1}} \phi_*} \\
&+ \boldsymbol{\phi_*^\top \hat{\Theta}^{K_{i+1}} \Psi_*} + \boldsymbol{\Psi_*^\top \hat{\Theta}^{K_{i+1}} \Psi_*},
\end{aligned}
\end{equation}
where \(\boldsymbol{\Psi_*} = \begin{bmatrix}
\boldsymbol{0} & \boldsymbol{0}\\
\boldsymbol{(K_* - K_{i+1}) A_\epsilon} & \boldsymbol{(K_* - K_{i+1}) B_\epsilon}
\end{bmatrix}\). Noting that \(\boldsymbol{-\Psi_*^\top \Theta_* \phi_* = 0}\) and \(\boldsymbol{-\phi_*^\top \Theta_* \Psi_* = 0}\), we have:
\begin{equation}
\begin{aligned}
\boldsymbol{\hat{\Theta}^{K_{i+1}} - \Theta_*} =& \boldsymbol{(\phi_*^\top + \Psi_*^\top) (\hat{\Theta}^{K_{i+1}} - \Theta_*) (\phi_* + \Psi_*)}\\
&+ \boldsymbol{\Psi_*^\top \Theta_* \Psi_*}\\
=&\boldsymbol{\phi_{K_i}^\top (\hat{\Theta}^{K_{i+1}} - \Theta_*) \phi_{K_i}} + \boldsymbol{\Psi_*^\top \Theta_* \Psi_*},
\end{aligned}
\end{equation}
thus proving that \(\boldsymbol{\hat{\Theta}^{K_{i+1}}} - \boldsymbol{\Theta_*} =
\sum_{j=0}^\infty
\left(\boldsymbol{\phi_{K_i}^\top}^j \boldsymbol{\Psi_*^\top \Theta_* \Psi_*} \boldsymbol{\phi_{K_i}^j}\right) \succeq 0\), similar to (\ref{finalconvergence}).
\end{proof}

\begin{theorem}
Under \textbf{Assumptions 1-3}, $\boldsymbol{K_i}$ in \textbf{Algorithm 1} converges elementwise to the optimal controller \(\boldsymbol{K_*}\), defined in \textbf{Proposition 2}, as \(\varepsilon\to0\). 
\end{theorem}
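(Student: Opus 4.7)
The overall plan is to combine the two inequalities established in Lemma 5 with the classical monotone convergence principle for matrix sequences, and then invoke the uniqueness of the stabilizing solution of the discrete-time algebraic Riccati equation characterized in Proposition 2. Since Algorithm 1 updates $\boldsymbol{K_i}$ through the continuous map $\boldsymbol{K_i} = [\boldsymbol{\hat{\Theta}^{K_i}_{\delta u\delta u}}]^{-1}\boldsymbol{\hat{\Theta}^{K_i}_{\delta u\epsilon}}$, it is sufficient to prove that the matrix sequence $\{\boldsymbol{\hat{\Theta}^{K_i}}\}$ converges elementwise to $\boldsymbol{\Theta_*}$, and then push this through the inversion and partition operations.

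First, I would argue that the sequence $\{\boldsymbol{\hat{\Theta}^{K_i}}\}_{i\ge 1}$ is monotonically non-increasing in the positive semidefinite order. This follows directly from identity \eqref{aa} in Lemma 5, since the summand $\boldsymbol{\phi_{K_i}^\top}^j\boldsymbol{\Psi_{K_i}^\top}\boldsymbol{\hat{\Theta}^{K_i}}\boldsymbol{\Psi_{K_i}}\boldsymbol{\phi_{K_i}^j}$ is a symmetric positive semidefinite matrix for each $j$, and the infinite sum converges because, by Lemma 4, every $\boldsymbol{K_i}$ is stabilizing and thus $\rho(\boldsymbol{\phi_{K_i}})<1$ by Lemma 1. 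Second, I would use identity \eqref{bb} in Lemma 5 to show that $\boldsymbol{\hat{\Theta}^{K_{i+1}}} \succeq \boldsymbol{\Theta_*}$ for every $i$, providing a uniform lower bound. A monotone non-increasing sequence of symmetric matrices that is bounded below converges elementwise to some symmetric limit $\boldsymbol{\hat{\Theta}^{\infty}}$ satisfying $\boldsymbol{\hat{\Theta}^{\infty}} \succeq \boldsymbol{\Theta_*}$.

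Next, I would identify $\boldsymbol{\hat{\Theta}^{\infty}}$ with $\boldsymbol{\Theta_*}$. Since $\boldsymbol{K_i}$ is generated from $\boldsymbol{\hat{\Theta}^{K_i}}$ by a continuous operation (Cholesky/inverse on the positive definite block $\boldsymbol{\Theta_{\delta u\delta u}}$, which stays invertible by the footnote after \eqref{recur}), the sequence $\boldsymbol{K_i}$ converges to a limit $\boldsymbol{K_\infty} = [\boldsymbol{\hat{\Theta}^{\infty}_{\delta u\delta u}}]^{-1}\boldsymbol{\hat{\Theta}^{\infty}_{\delta u\epsilon}}$. Passing to the limit in the algorithm's fixed-point equation \eqref{estimatere} together with Theorem 2, the limit satisfies $\boldsymbol{\hat{\Theta}^{\infty}} = \boldsymbol{\bar{Q}} + \boldsymbol{\phi_{K_\infty}^\top}\boldsymbol{\hat{\Theta}^{\infty}}\boldsymbol{\phi_{K_\infty}}$ with the greedy consistency condition $\boldsymbol{K_\infty} = [\boldsymbol{\hat{\Theta}^{\infty}_{\delta u\delta u}}]^{-1}\boldsymbol{\hat{\Theta}^{\infty}_{\delta u\epsilon}}$. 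Expanding this pair and using the block structure from Proposition 2 recovers exactly the discrete-time algebraic Riccati equation together with the optimal gain formula. By the uniqueness of its stabilizing positive definite solution, $\boldsymbol{\hat{\Theta}^{\infty}}=\boldsymbol{\Theta_*}$ and hence $\boldsymbol{K_\infty}=\boldsymbol{K_*}$, so $\boldsymbol{K_i}\to\boldsymbol{K_*}$ elementwise as $\varepsilon\to 0$.

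The main obstacle is justifying that the limit $\boldsymbol{K_\infty}$ is itself stabilizing, which is needed to invoke the uniqueness clause of the Riccati equation. I would handle this by noting that the bound $\boldsymbol{\hat{\Theta}^{K_i}}\succeq\boldsymbol{\Theta_*}\succ 0$ together with the Lyapunov identity \eqref{rerecur} confines $\boldsymbol{\phi_{K_\infty}}$ to the set of Schur matrices: any spectral radius of $\boldsymbol{\phi_{K_\infty}}$ equal to one would contradict the boundedness of $\boldsymbol{\hat{\Theta}^{\infty}}$ through the iterated sum representation. A secondary technical point is that the infinite-sum identities in Lemma 5 must be applied uniformly in $i$; this is guaranteed because each $\boldsymbol{\phi_{K_i}}$ is Schur and the stage-cost matrix $\boldsymbol{\bar{Q}}$ is fixed, so the partial sums are uniformly bounded by $\boldsymbol{\hat{\Theta}^{K_1}}$, which legitimizes the termwise limit.
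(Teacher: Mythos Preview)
Your proposal is correct and follows essentially the same approach as the paper's proof: both establish monotone convergence of $\{\boldsymbol{\hat{\Theta}^{K_i}}\}$ via Lemma~5 and then identify the limit with $\boldsymbol{\Theta_*}$ through uniqueness of the optimal fixed point (the paper phrases this via the HJB equation, you via the discrete-time algebraic Riccati equation, which is equivalent here). Your treatment is in fact slightly more careful than the paper's in explicitly verifying that the limiting gain $\boldsymbol{K_\infty}$ remains stabilizing before invoking uniqueness, a point the paper leaves implicit.
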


\begin{proof}
It follows from \textbf{Lemma 5} that \(\boldsymbol{\hat{\Theta}^{K_i}} \succeq \boldsymbol{\hat{\Theta}^{K_{i+1}}} \succeq \boldsymbol{\Theta_*}\), thus for any \(\boldsymbol{x \in \mathbb{R}^n}\), it holds that \(\boldsymbol{x^\top \hat{\Theta}^{K_i} x}\) 
is a non-increasing sequence as $i$ increases, bounded below, which hence tends to a limit as $i\to\infty$.
For any \(\boldsymbol{x}, \boldsymbol{y} \in \mathbb{R}^n\), we have:
$\boldsymbol{x^\top \hat{\Theta}^{K_i} y}$$ = \frac{1}{4} \boldsymbol{\left[ (x+y)^\top \hat{\Theta}^{K_i} (x+y) - (x-y)^\top \hat{\Theta}^{K_i} (x-y) \right]}$
Since the right-hand side converges for each pair \(\boldsymbol{(x,y)}\) as \(i \to \infty\), the bilinear form \(\boldsymbol{x^\top \hat{\Theta}^{K_i} y}\) tends to a limit.
Consequently, taking \(\boldsymbol{x}\) and \(\boldsymbol{y}\) as standard basis vectors, we deduce that each individual entry of matrix \(\boldsymbol{\Theta^{K_i}}\) 
tends to a limit point.
Thus, we conclude that the matrix sequence \(\boldsymbol{\Theta^{K_i}}\) converges element-wise to a limiting matrix as $i\to\infty$.

We then show that \(\boldsymbol{\Theta_*}\) in \textbf{Proposition 2} is a unique fixed point of \textbf{Algorithm~1} as \(\varepsilon\to0\) and therefore \(\boldsymbol{\Theta^{K_i}}\to\boldsymbol{\Theta_*}\). Let \(\boldsymbol{K_* = (\Theta_{*\delta u \delta u})^{-1} \Theta_{*\delta u \epsilon}}\), and let \(\boldsymbol{\Theta^{K_*}}\) solve (\ref{estimation}) for \(\boldsymbol{K_{i+1} = K_*}\), therefore, \(\boldsymbol{\hat{\Theta}^{K_{i+1}} = \Theta_*}\). Following this, we show that this fixed point is unique. Assume that \textbf{Algorithm 1} reaches a fixed point such that \(\boldsymbol{\Theta^K}=\boldsymbol{\bar{Q}}+ \boldsymbol{\phi_K}^\top \boldsymbol{\Theta^K}\boldsymbol{\phi_K}\) with \(\boldsymbol{K^F = (\Theta^F_{\delta u \delta u})^{-1} \Theta^F_{\delta u \epsilon}}\). As \(\boldsymbol{\delta u_k = -K^F \epsilon_k}\) is the minimizer of (\ref{minimizationpro}), then \(\boldsymbol{V^{K^F}(\epsilon_k) = \min_{K_k} [c(\epsilon_k, -K_k \epsilon_k) + V^{K^F}(\epsilon_{k+1})]}\). This is the corresponding Hamilton-Jacobi-Bellman equation, which is known to have a unique solution \(\boldsymbol{V^{K^F} = V_*}\), and, therefore, \(\boldsymbol{\Theta^F = \Theta_*}\).
\end{proof}

\begin{lemma}\textbf{(Decreasing \(\boldsymbol{P^{K_i}}\)).}
Let \(\boldsymbol{P^{K_i}}\) and \(\boldsymbol{P^{K_{i+1}}}\) be the solutions of Equation \eqref{bellman1} for the controllers \(\boldsymbol{K_i}\) and \(\boldsymbol{K_{i+1}}\), respectively, at two consecutive iterations of \textbf{Algorithm 1}. It then holds that \(\boldsymbol{P^{K_i} \succeq P^{K_{i+1}}}\).
\end{lemma}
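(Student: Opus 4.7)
The plan is to exploit the policy-improvement property built into Algorithm 1 and then telescope a one-step Lyapunov inequality for $\boldsymbol{P^{K_i}}$. By \textbf{Theorem 2}, $\boldsymbol{\hat{\Theta}^{K_{i+1}}}=\boldsymbol{\Theta^{K_i}}$, so the update $\boldsymbol{K_{i+1}}=[\boldsymbol{\Theta^{K_i}_{\delta u\delta u}}]^{-1}\boldsymbol{\Theta^{K_i}_{\delta u\epsilon}}$ is precisely the greedy minimizer of $\boldsymbol{Q^{K_i}(\epsilon,\cdot)}$. Therefore, for every state $\boldsymbol{\epsilon_k}$,
\begin{equation*}
\boldsymbol{Q^{K_i}(\epsilon_k,-K_{i+1}\epsilon_k)} \le \boldsymbol{Q^{K_i}(\epsilon_k,-K_i\epsilon_k)} = V^{\boldsymbol{K_i}}(\boldsymbol{\epsilon_k}) = \tfrac{1}{2}\boldsymbol{\epsilon_k^\top P^{K_i}\epsilon_k}.
\end{equation*}

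First, I would expand the left-hand side using the recursive definition $\boldsymbol{Q^{K_i}(\epsilon_k,\delta u_k)}=c(\boldsymbol{\epsilon_k},\boldsymbol{\delta u_k})+V^{\boldsymbol{K_i}}(\boldsymbol{\epsilon_{k+1}})$ from \eqref{qfunction} to obtain the one-step Lyapunov decrease
\begin{equation*}
c(\boldsymbol{\epsilon_k},-\boldsymbol{K_{i+1}\epsilon_k}) + \tfrac{1}{2}\boldsymbol{\epsilon_{k+1}^\top P^{K_i}\epsilon_{k+1}} \le \tfrac{1}{2}\boldsymbol{\epsilon_k^\top P^{K_i}\epsilon_k},
\end{equation*}
evaluated along the closed-loop trajectory $\boldsymbol{\epsilon_{k+1}}=(\boldsymbol{A_\epsilon}-\boldsymbol{B_\epsilon K_{i+1}})\boldsymbol{\epsilon_k}$. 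This is the standard ``value function of old policy is a Lyapunov function for the improved policy'' step.

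Next, by \textbf{Lemma 4}, $\boldsymbol{K_{i+1}}$ is stabilizing, so $\boldsymbol{\epsilon_k}\to 0$ exponentially under $\boldsymbol{\delta u_k}=-\boldsymbol{K_{i+1}\epsilon_k}$, and since $\boldsymbol{P^{K_i}}\succ 0$ (Lemma 1 type argument already invoked above (\ref{bellman1})), $\boldsymbol{\epsilon_k^\top P^{K_i}\epsilon_k}\to 0$. Summing the one-step inequality from $k=0$ to $\infty$ telescopes to
\begin{equation*}
\sum_{k=0}^{\infty} c(\boldsymbol{\epsilon_k},-\boldsymbol{K_{i+1}\epsilon_k}) \le \tfrac{1}{2}\boldsymbol{\epsilon_0^\top P^{K_i}\epsilon_0}.
\end{equation*}
The left-hand side is exactly $V^{\boldsymbol{K_{i+1}}}(\boldsymbol{\epsilon_0})=\tfrac{1}{2}\boldsymbol{\epsilon_0^\top P^{K_{i+1}}\epsilon_0}$ by \eqref{overallcost1}. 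Since $\boldsymbol{\epsilon_0}$ is arbitrary, this gives $\boldsymbol{P^{K_i}\succeq P^{K_{i+1}}}$.

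The only delicate point is rigorously justifying the telescoping sum; this reduces to verifying that the boundary term $\boldsymbol{\epsilon_k^\top P^{K_i}\epsilon_k}$ vanishes, which follows from the exponential stability of $(\boldsymbol{A_\epsilon}-\boldsymbol{B_\epsilon K_{i+1}})$ via \textbf{Lemmas 1} and \textbf{4}. An entirely algebraic alternative would combine \textbf{Lemma 5} with the identity $\boldsymbol{P^K}=[\boldsymbol{I};-\boldsymbol{K}]^\top\boldsymbol{\Theta^K}[\boldsymbol{I};-\boldsymbol{K}]$: the minimizing property of $\boldsymbol{K_{i+1}}$ gives $[\boldsymbol{I};-\boldsymbol{K_{i+1}}]^\top\boldsymbol{\Theta^{K_i}}[\boldsymbol{I};-\boldsymbol{K_{i+1}}]\preceq \boldsymbol{P^{K_i}}$, and $\boldsymbol{\Theta^{K_i}}\succeq\boldsymbol{\Theta^{K_{i+1}}}$ further upper-bounds $\boldsymbol{P^{K_{i+1}}}$ by the same quantity, chaining to the result. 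I would favor the Lyapunov route since it makes the role of stability explicit and mirrors the classical policy-iteration monotonicity argument.
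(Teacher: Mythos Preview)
Your proposal is correct. Interestingly, the \emph{alternative} you sketch at the end is precisely the paper's proof: the paper chains the identity $\boldsymbol{P^K}=[\boldsymbol{I};-\boldsymbol{K}]^\top\boldsymbol{\Theta^K}[\boldsymbol{I};-\boldsymbol{K}]$ with Lemma~5 ($\boldsymbol{\Theta^{K_i}}\succeq\boldsymbol{\Theta^{K_{i+1}}}$) and the minimizing property of $\boldsymbol{K_{i+1}}$ to obtain, for every $\boldsymbol{\epsilon_k}$,
\[
\boldsymbol{\epsilon_k^\top P^{K_{i+1}}\epsilon_k}
=[\boldsymbol{I};-\boldsymbol{K_{i+1}}]^\top\boldsymbol{\Theta^{K_{i+1}}}[\cdot]
\le[\boldsymbol{I};-\boldsymbol{K_{i+1}}]^\top\boldsymbol{\Theta^{K_i}}[\cdot]
\le[\boldsymbol{I};-\boldsymbol{K_i}]^\top\boldsymbol{\Theta^{K_i}}[\cdot]
=\boldsymbol{\epsilon_k^\top P^{K_i}\epsilon_k}.
\]
Your primary route is the classical policy-iteration telescoping argument, which is genuinely different in spirit: it never invokes Lemma~5 and instead relies only on the greedy one-step inequality plus stability of $\boldsymbol{K_{i+1}}$ to kill the boundary term. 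The trade-off is that the paper's algebraic chain is a three-line computation once Lemma~5 is in hand and requires no limiting argument, whereas your approach is self-contained (independent of Lemma~5), makes the role of closed-loop stability explicit, and is the argument that generalizes most naturally beyond the quadratic setting. Both are valid here.
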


\begin{proof}
Consider the following sequence of inequalities:
\begin{subequations}
    \begin{align}
    \boldsymbol{\epsilon_k^\top P^{K_{i+1}} \epsilon_k} &= \begin{bmatrix}
    \boldsymbol{\epsilon_{k}}^\top & -\boldsymbol{\epsilon_{k}^\top K_{i+1}^\top}
    \end{bmatrix} \boldsymbol{\hat{\Theta}^{K_{i+1}}} \begin{bmatrix}
    \boldsymbol{\epsilon_{k}} \\ -\boldsymbol{K_{i+1} \epsilon_{k}}
    \end{bmatrix}\label{42a}\\
    &\leq \begin{bmatrix}
    \boldsymbol{\epsilon_{k}}^\top & -\boldsymbol{\epsilon_{k}^\top K_{i+1}^\top}
    \end{bmatrix} \boldsymbol{\hat{\Theta}^{K_i}} \begin{bmatrix}
    \boldsymbol{\epsilon_{k}} \\ -\boldsymbol{K_{i+1} \epsilon_{k}}
    \end{bmatrix}\label{42b}\\
    &\leq \begin{bmatrix}
    \boldsymbol{\epsilon_{k}}^\top & -\boldsymbol{\epsilon_{k}^\top K_{i}^\top}
    \end{bmatrix} \boldsymbol{\hat{\Theta}^{K_i}} \begin{bmatrix}
    \boldsymbol{\epsilon_{k}} \\ -\boldsymbol{K_{i} \epsilon_{k}}
    \end{bmatrix}\label{42c}\\
    &= \boldsymbol{\epsilon_k^\top P^{K_{i}} \epsilon_k}\label{42d}.
    \end{align}
\end{subequations}
We derive inequality \eqref{42b} from \eqref{42a} using Lemma~\ref{lem:dec_theta}.
Since $\boldsymbol{K_{i+1}}$ is calculated through the minimization process described in Equation \eqref{minimizationpro}, we can obtain inequality \eqref{42c} from \eqref{42b}.
\end{proof}

\begin{theorem}\label{Thm:conv_eq}
Under \textbf{Assumptions 1-3}, {\bf Algorithm 1} ensures the augmented system (\ref{timingterms}) converges to the equilibrium point \((\bm{\delta u^*_k} = 0, \bm{\epsilon^*_k} = 0, \bm{e^*_k} = 0)\).
\end{theorem}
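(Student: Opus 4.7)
The plan is to use the value functions $V^{K_i}(\boldsymbol{\epsilon_k})=\tfrac{1}{2}\boldsymbol{\epsilon_k^\top P^{K_i}\epsilon_k}$ as a piecewise-quadratic Lyapunov function for the switched closed loop generated by \textbf{Algorithm~1}, combining an intra-iteration Lyapunov decrease with the monotonicity of $\{\boldsymbol{P^{K_i}}\}$ across iterations.

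First I would observe that by \textbf{Lemma~4} every iterate $\boldsymbol{K_i}$ is stabilizing, hence \eqref{bellman1} admits a unique $\boldsymbol{P^{K_i}}\succ 0$. Within iteration $i$ the closed loop reads $\boldsymbol{\epsilon_{k+1}}=(\boldsymbol{A_\epsilon}-\boldsymbol{B_\epsilon K_i})\boldsymbol{\epsilon_k}+\boldsymbol{B_\epsilon es_k}$, and substituting into $V^{K_i}(\boldsymbol{\epsilon_{k+1}})$ together with \eqref{bellman1} cancels the pure-quadratic term in $\boldsymbol{\epsilon_k}$ to give
\begin{equation*}
V^{K_i}(\boldsymbol{\epsilon_{k+1}})-V^{K_i}(\boldsymbol{\epsilon_k})=-\tfrac{1}{2}\boldsymbol{\epsilon_k^\top Q_{\mathrm{eff}}\epsilon_k}+r_k,
\end{equation*}
with $r_k=\boldsymbol{\epsilon_k^\top (A_\epsilon-B_\epsilon K_i)^\top P^{K_i} B_\epsilon es_k}+\tfrac{1}{2}\boldsymbol{es_k^\top B_\epsilon^\top P^{K_i} B_\epsilon es_k}$. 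Since $\boldsymbol{Q_{\mathrm{eff}}}\succ 0$, this is a strict descent modulo a disturbance driven by the probing noise.

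I would then handle the policy switch at the end of each data-collection window via \textbf{Lemma~6}: $\boldsymbol{P^{K_{i+1}}}\preceq\boldsymbol{P^{K_i}}$ implies $V^{K_{i+1}}(\boldsymbol{\epsilon})\leq V^{K_i}(\boldsymbol{\epsilon})$ at every state, so the Lyapunov value cannot jump upward at a switch. Telescoping the per-step decrease within each iteration and concatenating across iterations yields
\begin{equation*}
\tfrac{1}{2}\sum_{k=0}^{\infty}\boldsymbol{\epsilon_k^\top Q_{\mathrm{eff}}\epsilon_k}\leq V^{K_0}(\boldsymbol{\epsilon_0})+\sum_{k=0}^{\infty}|r_k|.
\end{equation*}
Because $\boldsymbol{es_k}\to 0$ and is designed to be summable in the sense of \cite{lopez2023efficient} (see the footnote after \textbf{Algorithm~1}), the remainder is finite, and with $\boldsymbol{Q_{\mathrm{eff}}}\succ 0$ the summability of the left-hand side forces $\boldsymbol{\epsilon_k}\to 0$. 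An application of \textbf{Proposition~1} then converts $\boldsymbol{\epsilon_k}\to 0$ into $\boldsymbol{\delta u_k}\to 0$ and $\boldsymbol{e_k}\to 0$, which is the claimed equilibrium.

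The principal obstacle is the tension between the persistence-of-excitation noise required by \textbf{Theorem~2} to uniquely recover $\boldsymbol{\hat{\Theta}^{K_{i+1}}}$ and the Lyapunov descent, which the same noise perturbs; the resolution is precisely the vanishing, summable design of $\boldsymbol{es_k}$, for which $\boldsymbol{\epsilon_k}$ must first be shown to remain bounded (an ISS-style estimate from the same Lyapunov inequality). A secondary subtlety is the policy switching between windows, which could a priori invalidate a fixed Lyapunov analysis; this is overcome by the monotonicity $\boldsymbol{P^{K_{i+1}}}\preceq\boldsymbol{P^{K_i}}$ of \textbf{Lemma~6}, which is therefore more than a technical convenience and is the load-bearing element of the switched-system argument.
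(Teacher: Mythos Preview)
Your core structure coincides with the paper's: the same piecewise-quadratic Lyapunov function $V_k=\boldsymbol{\epsilon_k^\top P^{K_k}\epsilon_k}$, the same intra-iteration decrease from \eqref{bellman1}, and the same reliance on \textbf{Lemma~6} to prevent jumps at the switching instants. The paper, however, silently drops the probing noise $\boldsymbol{es_k}$, writes the closed loop as $\boldsymbol{\epsilon_{k+1}}=(\boldsymbol{A_\epsilon}-\boldsymbol{B_\epsilon K_k})\boldsymbol{\epsilon_k}$, obtains $V_{k+1}-V_k\le -\boldsymbol{\epsilon_k^\top Q_{\mathrm{eff}}\epsilon_k}$, and then invokes the discrete-time LaSalle invariance principle. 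You instead keep $\boldsymbol{es_k}$ explicitly, carry the remainder $r_k$, and replace LaSalle by a telescoping/summability argument that requires $\sum_k|r_k|<\infty$. Your route is more honest about the exploration noise but incurs the extra obligations you correctly flag (boundedness of $\boldsymbol{\epsilon_k}$ via an ISS estimate, and summability of $\boldsymbol{es_k}$, which the paper only guarantees vanishes); the paper's route is shorter but is really a proof for the noise-free closed loop.

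One minor correction: the last step does not need \textbf{Proposition~1}. Since $\boldsymbol{\epsilon_k}=[\boldsymbol{\delta T_k}^\top,\boldsymbol{e_{k-1}}^\top]^\top$, the conclusion $\boldsymbol{\epsilon_k}\to 0$ already gives $\boldsymbol{e_k}\to 0$ directly, and $\boldsymbol{\delta u_k}=-\boldsymbol{K_i\epsilon_k}+\boldsymbol{es_k}\to 0$ follows from $\boldsymbol{\epsilon_k}\to 0$, the uniform bound $\boldsymbol{K_i}$ inherits from $\boldsymbol{\Theta_*}\preceq\boldsymbol{\hat\Theta^{K_i}}\preceq\boldsymbol{\hat\Theta^{K_1}}$, and $\boldsymbol{es_k}\to 0$. \textbf{Proposition~1} is the bridge from the equilibrium of \eqref{timingterms} to optimality of the original DHS, which is not part of the statement of Theorem~\ref{Thm:conv_eq}.
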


\begin{proof}
Consider the candidate Lyapunov function as:
$V_k(\bm{\epsilon_k}) = \bm{\epsilon_k}^\top \bm{P^{K_k}} \bm{\epsilon_k}$,
where \(\bm{K_k}\) is the control gain at step \(k\), and note that  \(V_k \geq 0\). Moreover, \(V_k = 0\) if and only if \(\bm{\epsilon_k = 0}\) as $\boldsymbol{P^{K_k}}$ is positive definite.

When Algorithm is \textbf{updating the Controller,}
\[
V_{k+1} - V_k = \bm{\epsilon_{k+1}}^\top \bm{P^{K_{k+1}}} \bm{\epsilon_{k+1}} - \bm{\epsilon_{k}}^\top \bm{P^{K_k}} \bm{\epsilon_{k}} \leq 0,
\]
from \textbf{Lemma 6}. Note also that
\[
\begin{aligned}
    &\bm{\epsilon_{k+1}}^\top \bm{P^{K_{k+1}}} \bm{\epsilon_{k+1}} - \bm{\epsilon_{k}}^\top \bm{P^{K_k}} \bm{\epsilon_{k}}\leq\\
    &\bm{\epsilon_k^\top(A_\epsilon-B_\epsilon K_k)^\top} \bm{P^{K_{k}}} \bm{(A_\epsilon-B_\epsilon K_k) \epsilon_k} -\bm{\epsilon_{k}}^\top \bm{P^{K_k}} \bm{\epsilon_{k}} \leq 0,
\end{aligned}
\]
where equality with zero holds if and only if \(\bm{\epsilon_k =0}\).

When the controller \textbf{remains unchanged},
\[
\begin{aligned}
    V_{k+1} - V_k &= \bm{\epsilon_{k+1}}^\top \bm{P^{K_k}} \bm{\epsilon_{k+1}} - \bm{\epsilon_{k}}^\top \bm{P^{K_k}} \bm{\epsilon_{k}} \\
    &= -\bm{\epsilon_k}^\top \bm{Q_{\textbf{eff}}} \bm{\epsilon_k} \leq 0.
\end{aligned}
\]
Here, \(V_k\) is non-increasing and bounded below by zero, and convergence to the equilibrium point specified in {\bf Theorem \ref{Thm:conv_eq}} follows from the discrete-time Lasalle's Invariance Principle. In particular, the system trajectory converges to the largest invariant set contained within \(\{\bm{\epsilon_k} \mid V_{k+1} = V_k\}\). This set is \(\{\bm{\epsilon_k} \mid \bm{\epsilon_k = 0}\}\) indicating convergence to 
the desired equilibrium point where we also have \(\bm{\delta u_k = 0},\bm{e_k} = 0\).
\end{proof}
\begin{remark}
It should be noted from the proof of {\bf Theorem~\ref{Thm:conv_eq}} that convergence to the desired equilibrium point is also ensured if  {\bf Algorithm 1} is not terminated, i.e. step 8 is not included 
and the controller $K_i$ is continuously updated.
\end{remark}

\section{Numerical Experiments}
We have evaluated the performance of the proposed energy-sharing temperature regulation scheme through simulations conducted on DHSs. These simulations accounted for the effects of disturbances and model variations.

\subsection{Coefficient variation}
\xinyiafterreview{We validated the DHS controller in a North China industrial park with three producers and eight loads \cite{yi2023energy}. The study examined coefficient variations in the mass flow rate matrix \(\boldsymbol{L_q}\) from its nominal value \(\boldsymbol{\hat{L}_q}\) and employed probing noise that decays over time. The probing noise is generated by combining sinusoidal functions with varying frequencies and phases. Although the model assumes constant mass flow, practical adjustments often respond rapidly to demand changes, leading to an uncertain system matrix. Table~\ref{compare1} compares the proposed online and optimal controllers, highlighting differences in control gain \(\boldsymbol{K}\) and convergence under model deviations. The Difference row reports the normalized error \(\frac{\|K_k-K_{\text{optimal}}\|_F}{\|\boldsymbol{K_{\text{optimal}}}\|_F}\), where \(\|\cdot\|_F\) denotes the Frobenius norm. The online controller converges to the optimal control gain \(\boldsymbol{K_*}\) within a few iterations, even under model variations of up to \(\pm 50\%\).} 
\vspace{-0.4cm}
\begin{table}[ht]
\centering
\caption{\xinyiafterreview{Algorithm performance with different model variations}}
\label{compare1}
\begin{tabular}{p{2.1cm}|p{0.7cm}|p{0.7cm}|p{0.7cm}|p{0.6cm}|p{0.6cm}|p{0.6cm}}
\hline
\textbf{Variation} &\textbf{-50\%}& \textbf{-20\%} & \textbf{-10\%} & \textbf{10\%} & \textbf{20\%} & \textbf{50\%}   \\
\hline
\textbf{Difference(e-06)} & 4.9878&4.9846 & 4.9842& 4.9855& 4.9859&4.9880\\
\textbf{Iteration} & 4& 4 & 3 & 3 & 3&3 \\
\hline
\end{tabular}
\end{table}
\vspace{-1.4cm}
  \begin{figure}[htbp]
	\vspace{0.8cm}
	\hspace{-0.1cm}
	\includegraphics[width=3.5in]{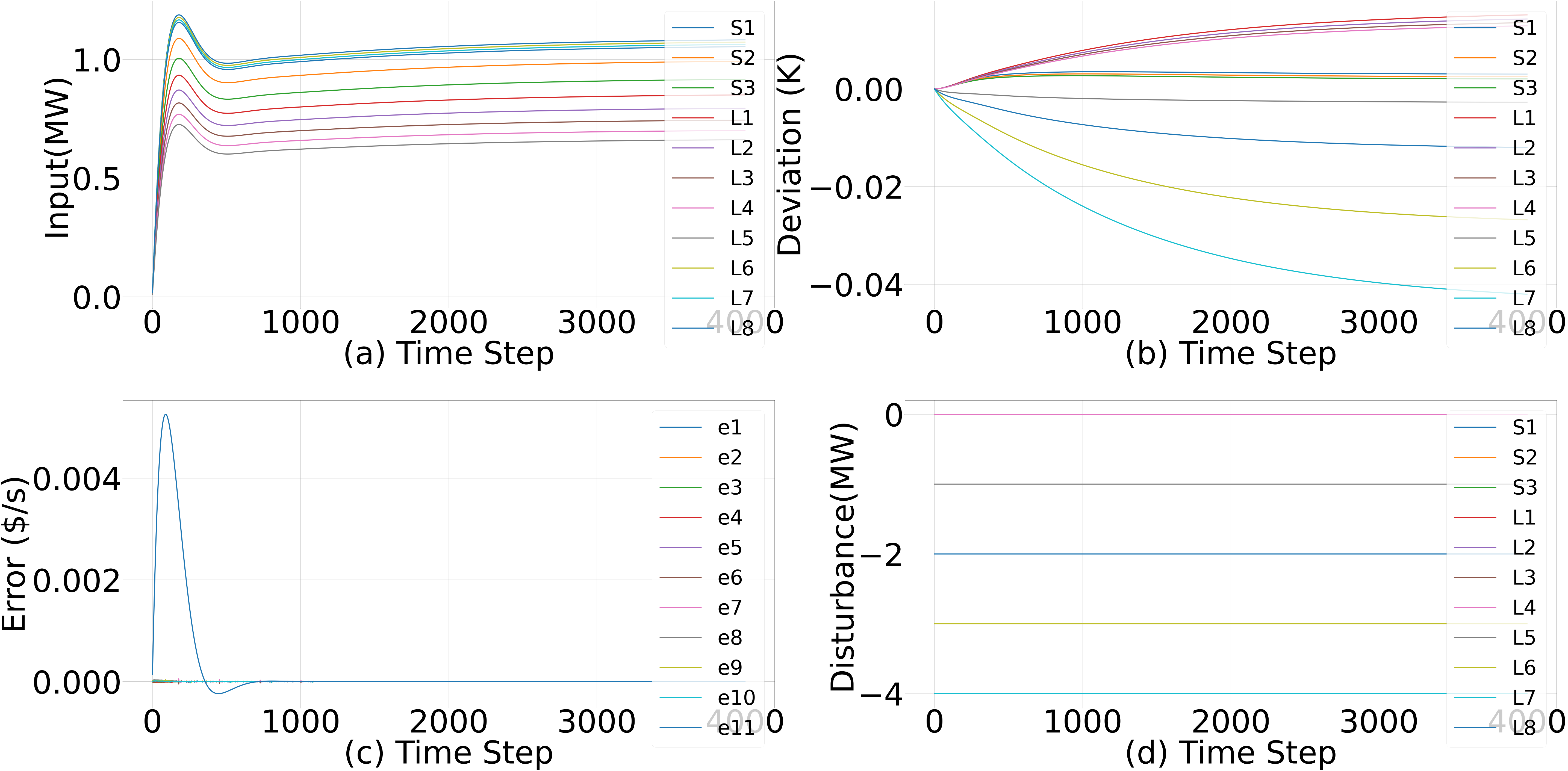}
	\vspace{-0.55cm}
	\caption{(a) Controlled input, (b) Temperature deviation, (c) Error, and (d) Load disturbances of $0MW$, $0MW$, $0MW$, $0MW$, $0MW$, $0MW$, $0MW$, $-1MW$, $-3MW$, $-4MW$, $-2MW$ in the case of $-50\%$ variation.}
	\label{industrialplus}
\end{figure}
\vspace{-0.3cm}

For illustration, we consider a \(-50\%\) variation from the nominal model parameters. Fig.~\ref{industrialplus} shows the controlled inputs and temperature deviations converging to the optimal economic dispatch defined in \eqref{opt1} and \eqref{economic2}. The error corresponds to the integrated optimality-condition error in (\ref{timingtermsb}), which converges to zero, confirming satisfaction of the condition. The algorithm identified the optimal controller in just four iterations. With \(N=275\) from the rank condition in \textbf{Lemmas 2,3}, the data-based scheme provided controller updates at prescribed instances \(i\) during \(k=1\)–\(1100\), after which the system operated with the fixed optimal controller for \(k=1101\)–\(4000\).

\subsection{\xinyiafterreview{The effect of impulse-like and piece-wise disturbances}}
\xinyiafterreview{We evaluated the \xinyiafterrevieww{system} response to sudden spikes and piecewise disturbances, as shown in Fig.~\ref{industrial2}. Disturbances ranged from \((-0.08, 0.09)\,\text{MW}\) at \(k=10\) to \((-0.8, 0.9)\,\text{MW}\) during \(k=7000\!-\!7100\) for each HX. Despite the large error at \(k=10\), which complicates \xinyiafterrevieww{the} Q-function matrix computation, \textbf{Algorithm~1} identified a near-optimal controller within 25 iterations. The controller was updated during \(k=1\!-\!6875\) and fixed for \(k=6876\!-\!10000\). The system converged to the optimal equilibrium with tracking error \(\boldsymbol{e}=0\). Incorporating error dynamics into the design improves robustness, enabling rapid recovery and stable performance under disturbances.}\footnote{\xinyiafterreview{Clock-wise and step-wise disturbances degrade Q-matrix estimation, highlighting the need for robust learning strategies left for future work.}}

\subsection{Comparison with nominal stabilizing controller}
As the initial stabilizing controller drives the augmented system to the equilibrium state \((\boldsymbol{\delta u_k^* = 0}, \boldsymbol{\epsilon_k^* = 0})\), we compared the stage costs of the on-policy and constant nominal controllers (Fig.~\ref{comparecost1}). Initially, both showed similar costs, with the on-policy controller relying on small probing noise before its first update at step 275. After this update, it achieved noticeably lower costs than the nominal controller, particularly between steps 275 and 1100, despite occasional spikes from probing noise.\footnote{\xinyiafterreview{These cost spikes are due to probing noise, which consists of random combinations of trigonometric functions.}} Beyond step 1100, the system operated under a near-optimal trajectory-driven on-policy controller, consistently outperforming the nominal controller. \xinyiafterreview{Future work will examine the effect of probing noise on performance and strategies to mitigate it while preserving PE.}

\vspace{-0.8cm}
  \begin{figure}[htbp]
	\vspace{0.6cm}
	\hspace{-0.1cm}
	\includegraphics[width=3.5in]{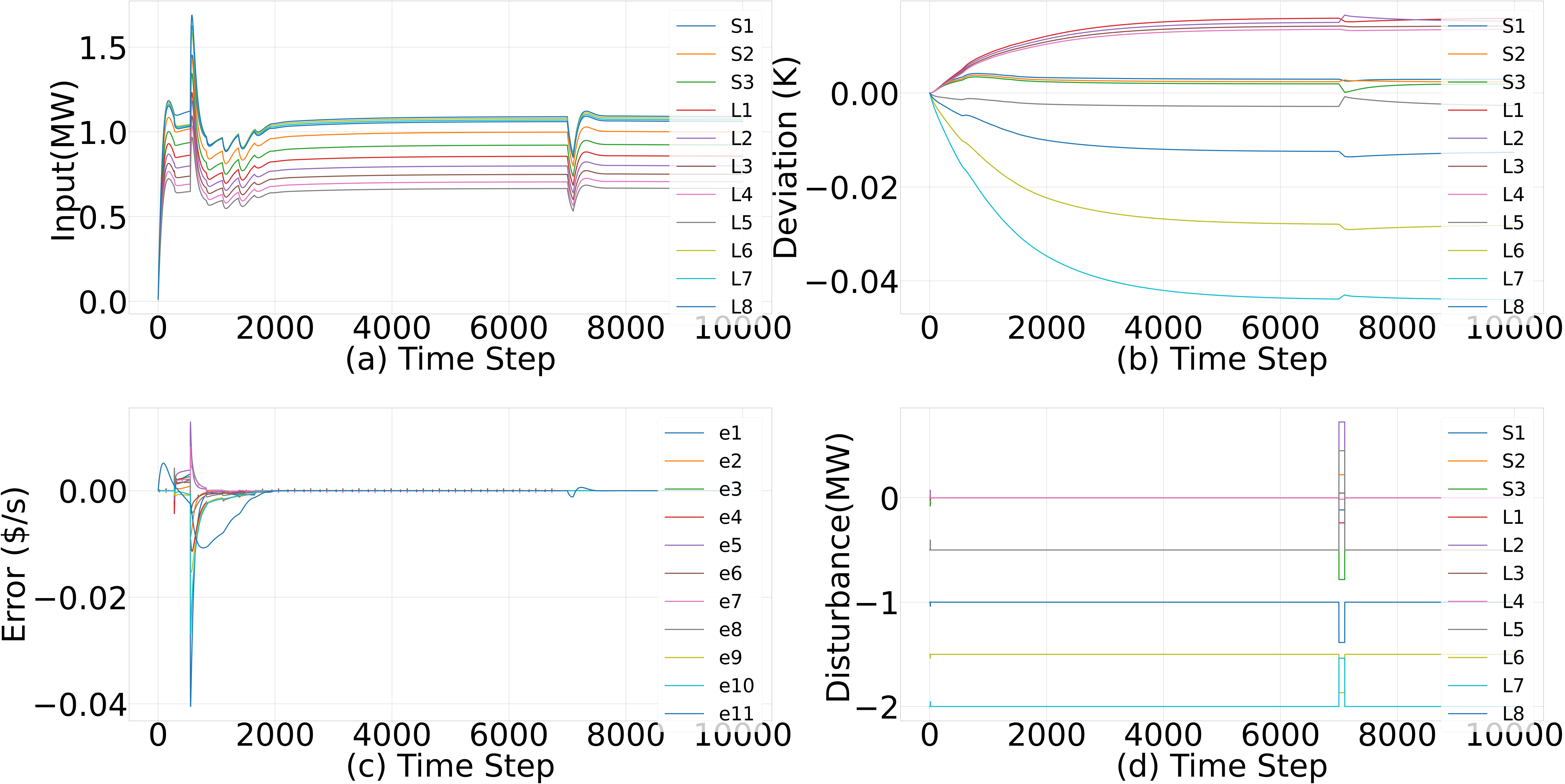}
	\vspace{-0.6cm}
	\caption{\xinyiafterreview{(a) Control input, (b) Temperature deviation, (c) Optimality condition related
error, and (d) Disturbances of DHS with sudden spikes.}}
	\label{industrial2}
\end{figure}
\vspace{-1.4cm}

  \begin{figure}[htbp]
	\vspace{0.6cm}
	\hspace{-0.1cm}
	\includegraphics[width=3.5in]{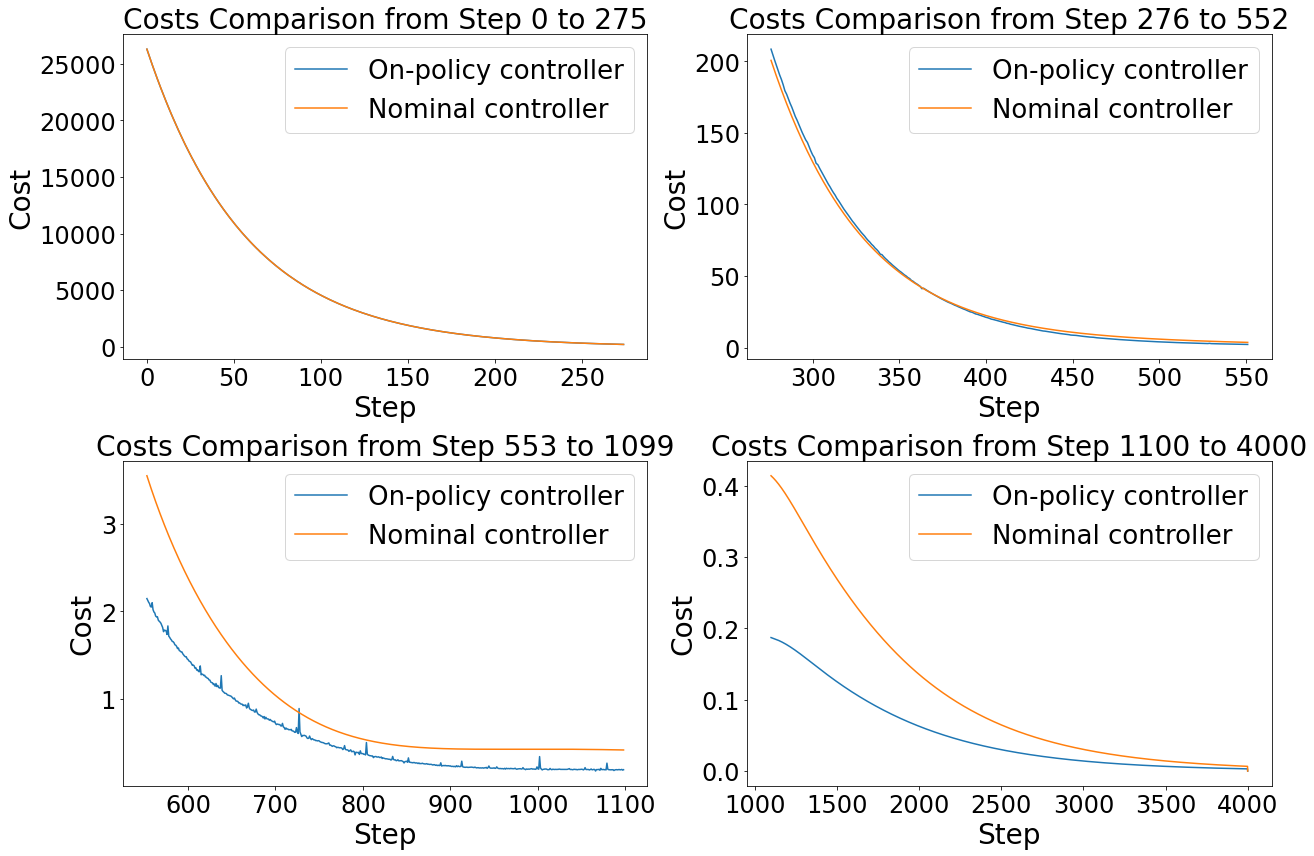}
	\vspace{-0.8cm}
	\caption{Stage costs comparison of nominal controller and on-policy controller}
	\label{comparecost1}
\end{figure}
\vspace{-0.2cm}

\subsection{Comparison with indirect controller}
The indirect (ID) method solves the LQR problem from a data-identified model, whereas the direct method avoids model identification and is more resilient to bias errors such as model order mismatch or nonlinearities \cite{lopez2023efficient}. When data are collected from a nonlinear system $\boldsymbol{\epsilon_{k+1}} = \boldsymbol{A_\epsilon \epsilon_k + B_\epsilon \delta u_k + w \cdot \epsilon_k^2}$, where $\boldsymbol{\epsilon_k^2}$ is the element-wise square of $\boldsymbol{\epsilon_k}$, the augmented system (\ref{timingterms}) remains nearly linear with mild nonlinear perturbations. As shown in Table~\ref{compare3}, even small nonlinearities (e.g., $w=10^{-7}$) significantly degraded ID performance due to bias, while the proposed method remained robust. With stronger nonlinearities (e.g., $w=10^{-4}$, Table~\ref{compare2}), the proposed method required more iterations to converge but still achieved high accuracy, whereas ID exhibited a clear loss of precision.

\vspace{-0.5cm}
\begin{table}[ht]
\centering
\caption{Comparison with different model variations ($w=e-07$)}\label{compare3}
\begin{tabular}{p{2.5cm}|p{0.7cm}|p{0.7cm}|p{0.7cm}|p{0.6cm}}
\hline
\textbf{Variation} & \textbf{-20\%} & \textbf{-10\%} & \textbf{10\%} & \textbf{20\%} \\
\hline
\textbf{RL Difference(e-06)} & 4.984&4.984 & 4.986& 4.986\\
\textbf{RL Iteration} & 4& 3 & 3 & 3 \\
\textbf{ID Difference(e-03)} & 1.200&1.241 &1.394& 1.490\\
\hline
\end{tabular}
\end{table}
\vspace{-0.7cm}
\begin{table}[ht]
\centering
\caption{Comparison with different model variations ($w=e-04$)}\label{compare2}
\begin{tabular}{p{2.5cm}|p{0.7cm}|p{0.7cm}|p{0.7cm}|p{0.6cm}}
\hline
\textbf{Variation} & \textbf{-20\%} & \textbf{-10\%} & \textbf{10\%} & \textbf{20\%} \\
\hline
\textbf{RL Difference(e-06)} & 5.025&4.997 & 4.991& 4.987\\
\textbf{RL Iteration} & 24& 21 & 16 & 15 \\
\textbf{ID Difference} & 0.020&0.017 &0.014& 0.013\\
\hline
\end{tabular}
\end{table}
\vspace{-0.3cm}
\section{Conclusion}
This paper presents an on-policy RL temperature regulator for DHSs, 
capable of achieving steady-state optimality without prior knowledge of disturbances or system models, while also having guaranteed convergence to an optimal controller. The effectiveness of the proposed controller is demonstrated through case studies across a range of scenarios.
\nocite{*}

\bibliographystyle{IEEEtran}
\bibliography{CDCarxivfull} 
\end{document}